\newcommand{\Oh}{\ensuremath{{\cal O}}}
\newcommand{\entropy}{\ensuremath{{\cal H}}}
\DeclareMathOperator*{\polylog}{polylg}
\newcommand{\lrm}{Left-to-Right-Minima}  
\newcommand{\rmq}{\ensuremath{\text{\sc rmq}}}
\newcommand{\lca}{\ensuremath{\text{\sc lca}}}
\newcommand{\rqq}{\ensuremath{\text{\sc rqq}}}
\newcommand{\psv}{\ensuremath{\text{\sc psv}}}
\newcommand{\access}{\ensuremath{\textit{access}}}
\newcommand{\rank}{\ensuremath{\textit{rank}}}
\newcommand{\select}{\ensuremath{\textit{select}}}
\newcommand{\map}{\ensuremath{\textit{map}}}
\newcommand{\unmap}{\ensuremath{\textit{unmap}}}
\newcommand{\nwhatever}{\ensuremath{\mathsf{X}}}        
\newcommand{\nPartition}{\ensuremath{\mathsf{nSeq}}}            
\newcommand{\vPartition}{\ensuremath{\mathsf{Seq}}}                    
\newcommand{\vLRM}{\ensuremath{\mathsf{LRM}}}                          
\newcommand{\runs}{\ensuremath{\mathsf{Runs}}} \def\vRuns{\runs}       
\newcommand{\nruns}{\ensuremath{\mathsf{nRuns}}}\def\nRuns{\nruns}     
\newcommand{\nsruns}{\ensuremath{\mathsf{nSRuns}}}\def\nSRuns{\nsruns} 
\newcommand{\sus}{\ensuremath{\mathsf{SUS}}}\def\vSUS{\sus}            
\newcommand{\nsus}{\ensuremath{\mathsf{nSUS}}}\def\nSUS{\nsus}         
\newcommand{\ssus}{\ensuremath{\mathsf{SSUS}}}\def\vSSUS{\ssus}        
\newcommand{\nssus}{\ensuremath{\mathsf{nSSUS}}}\def\nSSUS{\nssus}     
\newcommand{\nsms}{\ensuremath{\mathsf{nSMS}}}\def\nSMS{\nsms}
\theoremstyle{plain}
\newtheorem{definition}{Definition}
\newtheorem{lemma}[definition]{Lemma}
\newcounter{theoremCounter}\newtheorem{theorem}[theoremCounter]{Theorem}
\theoremstyle{remark}
\title{LRM-Trees: Compressed Indices, \\ Adaptive Sorting, and Compressed Permutations}
\author{
  J{\'e}r{\'e}my Barbay\thanks{Departamento de Ciencias de la
    Computaci{\'o}n (DCC), Universidad de Chile,
    \texttt{jeremy.barbay@dcc.uchile.cl}}
  ~and
  Johannes Fischer\thanks{Computer Science Department, Karlsruhe
    University, \texttt{johannes.fischer@kit.edu}}
}
\date{}
\begin{document}

\maketitle

\begin{abstract}
  LRM-Trees are an elegant way to partition a sequence of values into
  sorted consecutive blocks, and to express the relative position of
  the first element of each block within a previous block. They were
  used to encode ordinal trees and to index integer arrays in order to
  support range minimum queries on them.  We describe how they yield
  many other convenient results in a variety of areas, from data
  structures to algorithms: some compressed succinct indices for range
  minimum queries; a new adaptive sorting algorithm; and a compressed
  succinct data structure for permutations supporting direct and
  indirect application in time all the shortest as the permutation is
  compressible.
  \begin{VLONG} 
    As part of our review preliminary work, we also give an
    overview of the, sometimes redundant, terminology relative to
    succinct data-structures and indices.
  \end{VLONG}
\end{abstract}

\section{Introduction}

Introduced by Fischer~\cite{fischer10optimal} as an indexing
data structure which supports range minimum queries (RMQ) in constant
time and zero access to the main data,
and by Sadakane and
Navarro~\cite{sadakane10fully}
to support navigation operators on ordinal trees,
\emph{\lrm~Trees} (LRM-Trees) are an elegant way to
partition a sequence of values into sorted consecutive blocks, and to
express the relative position of the first element of each block within
a previous block.

We describe in this extended abstract how the use of LRM-Trees and
variants yields many other convenient results in a variety of areas,
from data structures to algorithms:
\begin{enumerate}
\item We define several compressed succinct indices supporting Range
  Minimum Queries (RMQs), which use less space than the $2n+o(n)$ bits
  used by the succinct index proposed by
  Fischer~\cite{fischer10optimal} when the indexed array is partially
  sorted.  
  Note that although a space of $2n$ bits is optimal \emph{in the
    worst case} over all possible permutations of size $n$, this is
  not necessarily optimal on more restricted classes of permutations.
  For example, if $A=[1,2,\dots,n]$, it is possible to support RMQs on
  $A$ without any additional space.
  Although there is a RMQ succinct index that exploits the
  compressibility of $A$ \cite{fischer08practical}, it only takes
  advantage of repetitions in the input and would still use $2n+o(n)$
  bits for the example above.
\item We propose a new  sorting algorithm and its adaptive analysis,
  asymptotically superior to adaptive merge
  sort~\cite{barbay09compressed},
  and superior in practice to Levcopoulos and Petersson's sorting
  algorithms~\cite{sortingShuffledMonotoneSequences}.
\item We design a compressed succinct data structure for permutations,
  which
  uses less space than the previous compressed succinct data structure
  from Barbay and Navarro~\cite{barbay09compressed}, and
  supports the access operator and its inverse in time all the shortest
  as the permutation is compressible, and range minimum queries and
  previous smaller value queries in constant time.
\end{enumerate}

All our results are in the word RAM model, where it is assumed that we can do arithmetic and logical operations on $w$-bit wide words in $O(1)$ time, and $w = \Omega(\lg n)$.
The following section gives examples of results that have been obtained in this natural model; we start by
giving an overview of the, sometimes redundant, concepts on
succinct data structures and succinct indices.

\section{Previous Work and Concepts}

\subsection{On the Various Types of Succinct Data Structures}
\label{sec:succ-data-struct}

\begin{VLONG}
  Some concepts (e.g., succinct indices and systematic data structures)
  on succinct data structures were invented more than once, at similar
  times but with distinct names, which makes their classification more
  complicated than necessary.
  Given that our results cross several areas (namely, compressed
  succinct data structures for permutations and indices supporting
  range minimum queries), which each use distinct names, we aim in
  this section to clarify the potential overlaps of concepts, to the
  extent of our knowledge.
\end{VLONG}

A \emph{Data Structure} $\mathcal{D}$ (e.g., run encoding of
permutations~\cite{barbay09compressed}) specifies how to encode data
from some \emph{Data Type} $\cal T$ (e.g., permutations) so that to
support the operators specified by a given \emph{Abstract Data Type}
$\cal A$ (e.g., direct and inverse applications).
Naturally, a data structure usually requires more space than a simple
encoding scheme of the same data-type, given that it supports
operators in addition to just memorize the data: the amount of
additional space required is called the \emph{redundancy} of the
data structure.

A \emph{Succinct Data Structure}~\cite{jacobson89space} is a data
structure whose redundancy is asymptotically negligible as compared to
the space required to \emph{encode} the data itself, in the worst or uniform
average case over all instances of fixed size $n$ (e.g., a succinct
data structure for bit vectors using $n+o(n)$ bits).
An \emph{Ultra-Succinct Data Structure}~\cite{jansson07ultra} is a
\emph{compressed} data-structure (w.r.t.~a parameter measuring the
\emph{compressibility} of the data) whose redundancy is asymptotically negligible
as compared to the space required to \emph{encode} the data
in the worst case over all instances for
which the size $n$ is fixed (e.g., an ultra-succinct data-structure for
binary strings~\cite{raman07succinct} uses $nH_0+o(n)$ bits, where
$H_0$ is the entropy (information content) of the string).
\begin{JEREMY}
  Is it a problem that we did not define yet the entropy at this point?
\end{JEREMY}
\begin{JOHANNES}
  No, I think that's common knowledge for STACS-reviewers.
  But there is a problem: as you did it, ``ultra-succinct'' was
  defined the same way as ``succinct.'' I changed it.
\end{JOHANNES}
A \emph{Compressed Succinct Data Structure}~\cite{barbay09compressed}
is a compressed data structure whose redundancy is asymptotically
negligible as compared to the space required to \emph{compress} the
data itself in the worst or average case over all instances for which
the size $n$ is fixed (e.g., a compressed succinct data
structure for binary strings uses
$nH_0+o(nH_0)$ bits).
\begin{JEREMY}
  Add a reference to the ARXIV version of the ISAAC paper about
  alphabet partitioning here?
\end{JEREMY}
\begin{JOHANNES}
  To keep it simple, I changed the example to bit-strings.
\end{JOHANNES}

An \emph{Index} is a structure which, given access to some data
structure $\cal D$ supporting a defined abstract data type $\cal A$
(e.g., a data structure for strings supporting the access operator),
extends the set of operators supported in good time to a more general
abstract data type $\cal A'$ (e.g., {\rank} and {\select} operators on
strings).\footnote{The fundamental {\rank} and {\select} operators on
  a bit-vector $B$ are defined as follows: $\rank_1(B,i)$ gives the
  number of 1's in the prefix $B[1,i]$, and $\select_1(B,i)$ gives the
  position of the $i$-th 1 in $B$, reading $B$ from left to right ($1
  \le i \le n$). Operations $\rank_0(B,i)$ and $\select_0(B,i)$ are
  defined analogously for 0-bits.}  By analogy with succinct data
structures, the space used by an index is called \emph{redundancy}.
A \emph{Succinct Index}~\cite{barbay07succinct2} or \emph{Systematic
  Data Structure}~\cite{gal07cell} $\cal I$ is simply an index whose
redundancy is negligible in comparison to the space required by $\cal
D$ in the worst case over instances of fixed size $n$.
\begin{VLONG}
  The separation between a data structure and its index was
  implicitly used before its
  formalization~\cite{sadakane06squeezing}
  and explicitly to prove lower bounds on the trade-off between space
  and supporting time of succinct data
  structures~\cite{golynski07optimal}.
\end{VLONG}
Of course, if $\cal D$ is a succinct data structure, then the
data structure formed by the union of $\cal D$ and $\cal I$ is a
succinct data structure as well: this modularity permits the
combination of succinct indices for distinct abstract data types on
similar data types~\cite{barbay07succinct2}.
A \emph{Compressed Succinct Index} is an index whose redundancy is
negligible in comparison to the space required by $\cal D$ in the
worst case over instances of fixed size $n$, as well as decreasing
with a given measure of compressibility of the index (e.g. the
short-cut data-structure~\cite{munro03succinct} supporting
$\pi^{-1}()$ uses space inversely proportional to the length of cycles
in the permutation $\pi$).
\begin{JEREMY}
  We had forgotten to define the term ``Compressed Succinct Index'',
  when this is one of our own results!!! Once written like this, it
  looks a bit arbitrary: the measure of compressibility of the index
  is not necessarily the measure of compressibility of the data...
\end{JEREMY}
\begin{JOHANNES}
  And now we have a *different* terminology for data structures and indices:
  Our first result with $2\nruns + o(n)$ bits is called ``compressed succinct
  index,'' whereas the redundancy is $o(n)$, which would fall under
  ``ultra-succinct'' for data structures.
  We should *re-think* this section: instead of cleaning up, we
  possibly become the source of more confusion!
\end{JOHANNES}

The terms of \emph{integrated encoding}~\cite{barbay07succinct2},
\emph{self-index}~\cite{MNspire07}, \emph{non-systematic
  data structure}~\cite{gal07cell,fischer10optimal} or \emph{encoding
  data structure}~\cite{brodal10space} refer to a data structure which
does not require access to any other data structure than itself, as opposed to a succinct index.
In the case of integrated encodings~\cite{barbay07succinct2} and
self-indices~\cite{MNspire07}, there is no need for any other data
structure, as they re-code all information and hence provide their own
mechanism for accessing the data.
\begin{VLONG}
  Those data structures are considered less practical from the point
  of view of modularity, but this approach has the advantage of
  yielding potentially lower redundancies:
  Golynski~\cite{golynski07optimal} showed that if a bit vector $B$ is
  stored verbatim using $n$ bits, then every index supporting the
  operators {\access}, {\rank}, and {\select} must have redundancy
  $\Omega(\frac{n\lg\lg n}{\lg n})$ bits, while
  P\v{a}tra\c{s}cu~\cite{patrascu08succincter} gave an integrated
  encoding for $B$ with redundancy $\Oh(\frac{ n }{\polylog n})$ bits.
\end{VLONG}
In the case of non-systematic
data structures~\cite{gal07cell,fischer10optimal} and encoding data
structures~\cite{brodal10space}, the emphasis is that those indexing
data structures require much less space than the data they index, and
being able to answer some queries (other than {\access}, obviously)
without any access to the main data. 
Of course, such an index can be seen as a data structure itself,
\emph{for a distinct data type} (e.g., a Lowest Common Ancestor 
non-systematic succinct index of $2n+o(n)$ bits for labeled trees is
also a simple data structure for ordinal trees): those notions are
relative to their context.

\begin{VLONG}
  Following the model of Daskalakis et al.'s
  analysis~\cite{2009-SODA-SortingAndSelectionInPosets-DaskalakisKarpMosselRuebsebfekdVerbin}
  of sorting algorithms for partial orders, we distinguish the
  \emph{data complexity} and the \emph{index complexity} of both
  algorithms and succinct indices, measuring separately the number of
  operations it performs on the data and on the index, respectively.
  Following these definitions, a non-systematic data structure is a
  succinct index of data complexity equal to zero, and the usual
  complexity of a succinct index is the sum of its data complexity
  with its index complexity.
  This distinction is important for instance when we consider a
  semi-external memory model, where it could occur that the data
  structure is too large to reside in main memory and is therefore
  kept in external memory (which is expensive to access), but its
  index is small enough to be stored in RAM.
  In such a case it is preferable to use a succinct index of minimal
  data complexity.
  \begin{INUTILE}
    In the case of algorithms, these definitions separate the amount
    of operations performed on the data (typically, comparisons in the
    comparison model) from the amount of operations performed on the
    meta-data gathered by the algorithm on the data itself (typically,
    multiplications and divisions in the comparison model, but also
    computation of a Huffman tree based on the lengths of the runs in
    the work from Barbay and Navarro described in
    Section~\ref{sec:adapt-sort-compr}).
    Such operations can have quite distinct cost (this is the case in
    particular in Daskalakis et al.'s
    study~\cite{2009-SODA-SortingAndSelectionInPosets-DaskalakisKarpMosselRuebsebfekdVerbin}),
    and it is easy to sum them to get the traditional complexity
    measure, if only for compression with previous results.
  \end{INUTILE}
\end{VLONG}

\begin{INUTILE}
  \begin{JEREMY}
    The results of this section were cited only once, in the proof of
    Theorem~\ref{th:nsruns}, so I replaced by a simple cite in the
    proof.
  \end{JEREMY}
  \subsection{Rank and Select on Binary Strings}
  \label{sect:rank}

  Consider a \emph{bit-string} $B[1,n]$ of length $n$. We define the
  fundamental \emph{rank}- and \emph{select}-operations on $B$ as
  follows: $\rank_1(B,i)$ gives the number of 1's in the prefix
  $B[1,i]$, and $\select_1(B,i)$ gives the position of the $i$-th 1 in
  $B$, reading $B$ from left to right ($1 \le i \le n$). Operations
  $\rank_0(B,i)$ and $\select_0(B,i)$ are defined analogously for
  0-bits.

  \begin{lemma}[Raman et al.~\cite{raman07succinct}]
    \label{lem:rank_and_select}
    Let $B$ be a bit-vector of length $n$ having $m$ 1's.
    There is a compressed data structure representing  $B$ 
    using $\lceil\lg{n\choose m}\rceil + o(n)$ bits and
    supporting $\rank_x(B,i)$ and $\select_x(B,i)$ in $\Oh(1)$ time
    for any $x \in \{0,1\}$ and $i\in[1..n]$.
    In particular, the space is $o(n)$ if $m=o(n)$.
  \end{lemma}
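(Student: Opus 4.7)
The plan is to follow the standard two-level block decomposition of Raman, Raman, and Rao, replacing the usual fixed-width per-block encoding with an \emph{enumerative} (combinatorial) encoding so that the total data space matches $\lceil \lg \binom{n}{m}\rceil$ bits rather than $n$ bits. Conceptually, I would partition $B$ into blocks of size $b=\frac{1}{2}\lg n$, and group these into superblocks of size $s=b\cdot \lceil\lg n\rceil$. For each block I would store its \emph{class} $c$ (the number of 1's in the block) and its \emph{offset} within the canonical enumeration of all $\binom{b}{c}$ binary strings of length $b$ with $c$ ones, using $\lceil \lg b \rceil$ bits for $c$ and $\lceil \lg \binom{b}{c}\rceil$ bits for the offset. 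Summing $\lceil \lg \binom{b}{c_i} \rceil$ over all blocks $i$ is, up to lower-order terms, bounded by $\lceil\lg\binom{n}{m}\rceil$ by the log-concavity of binomial coefficients.

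To support $\rank_1$ in $\Oh(1)$ time, I would keep two auxiliary arrays: one storing, at each superblock boundary, the cumulative number of 1's from the start of $B$ (using $\Oh(\lg n)$ bits per superblock, total $\Oh(n/\lg n)$ bits), and one storing, at each block boundary, the cumulative number of 1's from the start of the enclosing superblock (using $\Oh(\lg\lg n)$ bits per block, total $\Oh(n\lg\lg n/\lg n)=o(n)$ bits). Since blocks have variable length, I would also store pointers to the starting position of each block's offset code within the data area, again in $\Oh(n\lg\lg n/\lg n)$ bits by measuring offsets relative to superblock starts. For the residual query inside a single block of size $b=\frac{1}{2}\lg n$, I would precompute a universal table indexed by $(c,\text{offset},j)$ with $j<b$ that returns the rank within the block; its size is $\Oh(\sqrt{n}\,\polylog n)=o(n)$ bits and lookup is $\Oh(1)$.

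For $\select_1$ I would use the classical two-level hierarchy on the positions of the $1$-bits: sample every $(\lg^2 n)$-th 1 to identify a superblock window, and, depending on whether the window is sparse or dense, either store the answers explicitly (sparse case) or recurse with a sampling of every $(\lg\lg n)$-th 1 inside the window (dense case). The $\Oh(1)$ lookup inside the final block is handled by an analogous universal table. An identical construction applied to $0$-bits handles $\select_0$, and $\rank_0(B,i)=i-\rank_1(B,i)$.

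The main obstacle is twofold. First, I must verify that $\sum_i \lceil \lg \binom{b}{c_i}\rceil \le \lceil \lg \binom{n}{m}\rceil + o(n)$ so that the combinatorial encoding does not blow up the data budget; this uses the Vandermonde/log-concavity identity together with the fact that the $\lceil\cdot\rceil$ rounding loses at most one bit per block, i.e.\ $\Oh(n/\lg n)=o(n)$ bits in total. Second, because block lengths are variable, I need the block-start pointers and the class array themselves to fit in $o(n)$ bits while still giving $\Oh(1)$ access; this is where the choice $b=\frac{1}{2}\lg n$ together with storing pointers relative to superblock boundaries is crucial, keeping every pointer in $\Oh(\lg\lg n)$ bits and the total overhead below $o(n)$. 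Once these two accounting steps go through, the table-based decoding of a block's bits in $\Oh(1)$ time finishes the proof.
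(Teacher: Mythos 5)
Your proposal is essentially correct, but there is nothing in the paper to compare it against: the paper states this lemma purely as a citation to Raman, Raman, and Rao~\cite{raman07succinct} and gives no proof of its own (indeed, the lemma sits in an excluded block of the source and is only invoked implicitly, e.g.\ in Theorem~\ref{th:nsruns}). What you have reconstructed is the standard RRR argument, and the two accounting steps you flag as the "main obstacles" are exactly the right ones: the bound $\sum_i \lceil \lg\binom{b}{c_i}\rceil \le \lg\binom{n}{m} + \Oh(n/b)$ follows from $\prod_i \binom{b}{c_i}\le\binom{n}{m}$ (Vandermonde) plus one bit of rounding loss per block, and the variable-length blocks are handled by $\Oh(\lg\lg n)$-bit pointers relative to superblock boundaries. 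Two minor nits: the class field needs $\lceil\lg(b+1)\rceil$ bits (the class ranges over $0,\dots,b$), and for the final claim "$o(n)$ space when $m=o(n)$" you would want to note explicitly that $\lg\binom{n}{m}\le m\lg(en/m)=o(n)$; neither affects the structure of the argument.
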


\end{INUTILE}

\subsection{\lrm~Trees }
\label{sect:lrm}

LRM-Trees are an elegant way to partition a sequence of
values into sorted consecutive blocks, and to express the relative
position of the first element of each block within a previous
block. 
They were introduced under this name as an internal tool for basic navigational
operations in ordinal
trees~\cite{sadakane10fully}
and, under the name of ``2d-Min Heaps,'' to index integer arrays
in order to support range minimum queries on
them~\cite{fischer10optimal}.

Let $A[1,n]$ be an integer array. 
For technical reasons, we define $A[0]=-\infty$ as the ``artificial''
overall minimum of the array.
 
\begin{definition}[Fischer~\cite{fischer10optimal}; Sadakane and Navarro~\cite{sadakane10fully}]
  \label{def:2dmin}
  For $1\le i \le n$, let $\psv_A(i)=\max\{j \in [0..i-1]~:~A[j] < A[i]\}$
  denote the \emph{previous smaller value} of position $i$.
  The \emph{\lrm~Tree (LRM-Tree)} $\mathcal{T}_A$ of $A$
  is an ordered labeled tree with vertices $0, \dots, n$.
  For $1 \le i \le n$, $\psv_A(i)$ is the parent node of $i$.  
  The children are ordered in increasing order from left to right.
\end{definition}

\begin{LONG}
  See Fig.~\ref{fig:minmin} for an example of LRM-Trees.
  \begin{figure}[t]
      \begin{center}
        \includegraphics[angle=-90,scale=1]{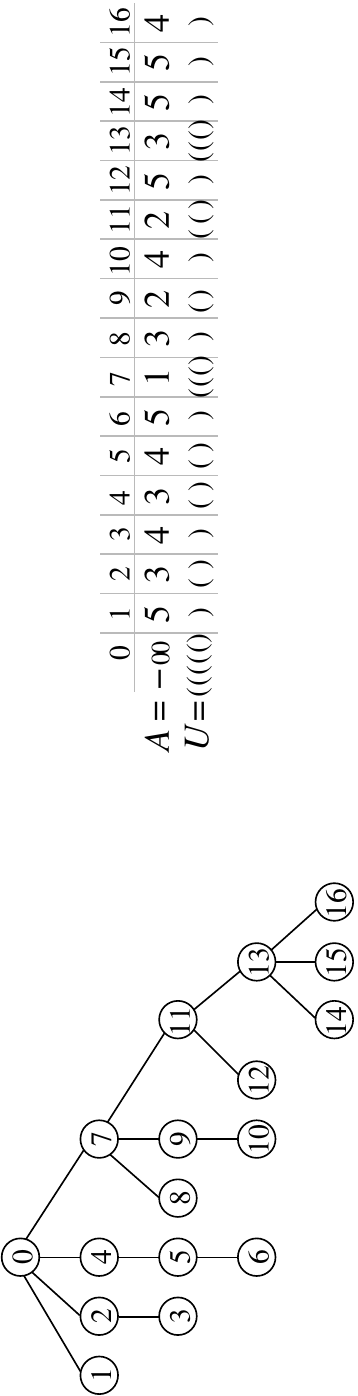}
      \end{center}
     \caption{The LRM-Tree $\mathcal{T}_A$ of the input array
      $A$, and its DFUDS $U$.}
    \label{fig:minmin}
  \end{figure}
\end{LONG}
The following lemma shows a simple way to construct the LRM-Tree in
linear time (Fischer \cite{fischer10optimal} gave a more complicated
linear-time algorithm with advantages that are irrelevant for this
paper.)

\begin{lemma}\label{lem:lrmConstruction}
  There is an algorithm computing the LRM-Tree of an array of $n$
  integers in at most $2n$ data comparisons.
\end{lemma}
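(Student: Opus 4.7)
The plan is to give a left-to-right sweep with an explicit stack that at all times stores the \emph{rightmost path} of the LRM-Tree built so far (with the sentinel $0$ sitting permanently at the bottom). The operations involved are standard: when processing $A[i]$, I pop indices $j$ from the top of the stack as long as $A[j]\ge A[i]$, then attach $i$ as a new (rightmost) child of the index now on top, and push $i$. Correctness follows directly from the definition of $\psv_A(i)$: the stack after step $i-1$ contains exactly the ancestors of $i-1$ in the tree in increasing order of depth, and those that get popped are precisely the indices $j<i$ whose $A[j]\ge A[i]$ and whose subtree has been finalized; what remains on top is the largest index $j<i$ with $A[j]<A[i]$, i.e.\ $\psv_A(i)$. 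Because children are pushed in order $1,2,\dots,n$, they end up ordered from left to right in increasing index, matching the definition.

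For the comparison count I would use an amortized argument. Charge to each iteration $i$ one comparison per pop (each such comparison witnesses $A[\mathrm{top}]\ge A[i]$, so the corresponding index is removed from the stack forever) plus exactly one final comparison that witnesses $A[\mathrm{top}]<A[i]$ and halts the popping. The sentinel $A[0]=-\infty$ ensures that this final comparison always exists and never pops $0$, so no special case is needed for an empty stack. The total number of ``pop'' comparisons over the whole run is bounded by the total number of pops, which is at most $n$ since each of the indices $1,\dots,n$ is pushed exactly once and $0$ is never popped. The total number of ``halting'' comparisons is exactly $n$, one per iteration. Summing gives at most $2n$ data comparisons.

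The only mildly delicate point, and what I would be careful with in writing it up, is to make sure the halting comparison is counted once per iteration and not double-counted against a pop; phrasing the loop as ``while the top is $\ge A[i]$, pop'' makes it transparent that each iteration performs (pops$_i$) successful comparisons plus one unsuccessful one, and $\sum_i \text{pops}_i \le n$. No further tricks are needed, and the running time is clearly $O(n)$ in the word RAM model since each push/pop is constant time.
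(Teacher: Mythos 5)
Your proposal is correct and is essentially the same algorithm as the paper's: the explicit stack you maintain is exactly the rightmost branch of the partially built LRM-Tree, and your ``pops plus one halting comparison per iteration'' count is the same amortized charging the paper uses (each element pays once when inserted and at most once when popped off the rightmost path). No gaps; the write-up matches the paper's proof in both structure and bound.
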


\begin{proof}
  The computation of the LRM-Tree corresponds to a simple scan over
  the input array,
  starting at $A[0]=-\infty$, 
  building down iteratively the current rightmost branch of the tree
  with increasing elements of the sequence
  till an element $x$ smaller than its predecessor is
  encountered, 
  at which point one climbs the right-most branch up to the first node
  $v$ holding a value smaller than $x$, and starts a new branch with a
  right-most child of $v$ of value $x$.
  As the root of the tree has value $A[0]=-\infty$ smaller than all
  elements, the algorithm always terminates.
 
  The construction algorithm performs at most $2n$ comparisons.
  Charging the last comparison performed during the insertion of an
  element $x$ to $x$ itself, and all previous comparisons to the
  elements already in the LRM-Tree, each element is charged at most
  twice: once when it is inserted into the tree, and once when
  scanning it while searching for a smaller value on the rightmost
  branch.
  As in the latter case all scanned elements are removed from the
  rightmost path, this second charging occurs at most once for each
  element.
\end{proof}

\subsection{Range Minimum Queries}
\label{sec:minim-range-quer}

We consider the following queries on a static array $A[1,n]$ (parameters $i$ and $j$ with $1\le i\le j\le n$):

\begin{definition}[Range Minimum Queries]
  $\rmq_A(i,j) = $ position of the minimum in $A[i,j]$.
\end{definition}

RMQs have a wide range of applications for various data structures and
algorithms, including text indexing~\cite{fischer09faster}, pattern
matching~\cite{crochemore08improved}, and more elaborate kinds of
range queries~\cite{chen04range}.

The connection between LRM-Trees and RMQs is given
as follows. 
For two given nodes $i$ and $j$ in a tree $T$,
let $\lca_T(i,j)$ denote their \emph{Lowest Common Ancestor} (LCA),
which is the deepest node that is an ancestor of both $i$ and $j$.
Now let $\mathcal{T}_A$ be the LRM-Tree of $A$. 
For arbitrary nodes $i$ and $j$ in $\mathcal{T}_A$,
$1 \le i < j \le n$, let $\ell=\lca_{\mathcal{T}_A}(i,j)$.
Then if $\ell=i$, $\rmq_A(i,j)$ is given by $i$, and otherwise,
$\rmq_A(i,j)$ is given by the child of $\ell$ that is on the path
from $\ell$ to $j$ \cite{fischer10optimal}.

Since there are succinct data structures supporting the LCA
operator\footnote{The inherent connection between RMQs and LCAs has
  been exploited also in the other direction \cite{bender05lowest}.}
in succinctly encoded trees in constant time, this yields a succinct
index:
\begin{lemma}[Fischer\cite{fischer10optimal}]\label{lem:LRMTreeSpace}
  For an array $A[1,n]$ of totally ordered objects,
  there is a non-systematic succinct index
  using $2n+o(n)$ bits and
  supporting RMQs in zero data queries and $\Oh(1)$
  index queries. 
  This index can be built using at most $\Oh(n)$ data comparisons.
\end{lemma}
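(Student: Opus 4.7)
The plan is to combine Lemma \ref{lem:lrmConstruction}, which builds $\mathcal{T}_A$ from $A$ in $\Oh(n)$ comparisons, with an off-the-shelf succinct tree representation and the reduction from RMQ to LCA recalled just before the statement. Concretely, I would first invoke Lemma \ref{lem:lrmConstruction} to produce the LRM-Tree in linear time with at most $2n$ data comparisons; note that $\mathcal{T}_A$ has exactly $n+1$ nodes (one per position of $A$ together with the artificial root labelled $0$), so any balanced-parenthesis or DFUDS encoding of an ordinal tree of that size takes $2(n+1)+o(n) = 2n+o(n)$ bits. Once $\mathcal{T}_A$ is stored this way, the underlying array $A$ is never touched again, which is what yields the ``zero data queries'' bound.

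Next, I would observe that the succinct tree representation of choice (say, DFUDS) supports in constant time all the navigation operators needed for the reduction: the identification of a node given its DFS label, the LCA operator $\lca_{\mathcal{T}_A}(\cdot,\cdot)$, and the operator that returns, for a node $v$ and a proper ancestor $u$ of $v$, the child of $u$ lying on the path from $u$ to $v$ (this last operation can be realized, for instance, as the level-ancestor of $v$ at depth $\text{depth}(u)+1$, or by combining $\lca$ with standard child/sibling operations). Given any query $(i,j)$ with $1 \le i < j \le n$, I would compute $\ell = \lca_{\mathcal{T}_A}(i,j)$; if $\ell = i$ I return $i$, otherwise I return the child of $\ell$ on the path towards $j$. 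The correctness of this is exactly the property recalled from \cite{fischer10optimal} just above the statement.

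The main obstacle is ensuring that all three ingredients coexist inside the $2n+o(n)$-bit budget while still running in $\Oh(1)$ time and performing no data comparisons. This reduces to pointing at one of the now-standard succinct tree data structures supporting $\lca$ and the required ancestor-child operation in constant time within $2n+o(n)$ bits, and verifying that the DFS labelling used by the succinct structure coincides with the labelling $0,1,\dots,n$ of $\mathcal{T}_A$ from Definition \ref{def:2dmin} (which it does, since siblings are ordered left-to-right in increasing order and LRM-Tree labels respect the DFS order). Adding the $\Oh(n)$ construction cost from Lemma \ref{lem:lrmConstruction} and the $\Oh(n)$-time preprocessing of the succinct tree representation yields the claimed construction bound of $\Oh(n)$ data comparisons, completing the proof.
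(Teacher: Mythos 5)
Your proposal is correct and follows essentially the same route as the paper's own argument (and Fischer's original one): encode $\mathcal{T}_A$ in a $2n+o(n)$-bit succinct ordinal-tree representation such as DFUDS, exploit the fact that the LRM-Tree's labels coincide with preorder so that node $i$ is located by a single \select, and answer $\rmq_A(i,j)$ via the constant-time LCA (plus child-on-path) operations using the reduction recalled before the statement, with the $\Oh(n)$ construction cost coming from Lemma~\ref{lem:lrmConstruction}. No substantive differences to report.
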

\begin{JEREMY}
  Two problems with this proof (hence I removed it): 
  1) if it was already given in your
  paper, we should not give it again, and if it is different we should
  explain why (and maybe not cite your paper in its title). 
  2) More importantly, it does not prove the linear time construction
  of the RMQ index. Does your paper prove it?
\end{JEREMY}
\begin{JOHANNES}
 ad 2:  YES OF COURSE I prove it!
\end{JOHANNES}
\begin{INUTILE}
  \begin{proof}
    We encode the LRM-Tree $\mathcal{T}_A$ by its \emph{Depth-First
      Unary Degree Sequence} (DFUDS) $U$ using $2n$
    bits~\cite{benoit05representing}.
    In $U$, nodes are listed in preorder, where node $i$ with $k$
    children is encoded as $k$ opening parentheses and one closing
    parenthesis (a single opening parenthesis is prepended to $U$ to
    make it balanced).
    As the nodes in $\mathcal{T}_A$ appear in preorder, we can jump to
    the description of node $i$ (corresponding to index $i$ in $A$) by
    a single select-statement in $U$.
    Further, there are $o(n)$-bit indices for DFUDS for computing LCAs
    in $\Oh(1)$ time~\cite{fischer10optimal}.
    By the discussion above, from the LCA we can compute the range
    minimum.
  \end{proof}
\end{INUTILE}

\subsection{Adaptive Sorting, and Compression of Permutations}
\label{sec:adapt-sort-compr}

Sorting a permutation of $n$ elements in the comparison model
typically requires $\Omega(n\lg n)$ comparisons in the worst case.
Yet, better results can be achieved for some parameterized classes of
permutations.
Among others,
Knuth~\cite{theArtOfComputerProgrammingVol3} considered \emph{Runs}
(ascending subsequences), counted by
$\nRuns(\pi)=1+|\{i~:\ 1 < i\leq n, \pi_{i+1}<\pi_i\}|;$
Levcopoulos and Petersson~\cite{sortingShuffledMonotoneSequences}
introduced \emph{Shuffled Up Sequences}, counted by
$\nSUS(\pi)=\min\{k:\pi\textrm{ is covered by }k\textrm{ increasing
  subsequences} \},$ and \emph{Shuffled Monotone Sequences}, counted
by $\nSMS(\pi) = \min\{k:\pi\textrm{ is covered by }k\textrm{ monotone
  subsequences} \};$
and Barbay and Navarro~\cite{barbay09compressed} introduced strict
variants of those concepts, namely \emph{Strict Runs} and \emph{Strict
  Shuffled Up Sequences}, where sorted subsequence are composed of
consecutive integers \begin{LONG}(e.g.
  $(\mathbf{2,3,4},1,\mathit{5,6,7,8})$ has two runs but three strict
  runs)\end{LONG}, counted by $\nsruns$ and $\nssus$, respectively.
For any ``measure of disorder'' $\nwhatever$ among those five, there
is a variant of the merge-sort algorithm which sorts a permutation
$\pi$ of size $n$ and measure $\nwhatever$ in time
$\Oh((n+1)\lg\nwhatever)$, which is optimal in the worst case among
instances of fixed size $n$ and fixed values of $\nwhatever$ (this is
not necessarily true for other measures of disorder).

As the merging cost induced by a subsequence is increasing with its
length, the sorting time of a permutation can be improved by
rebalancing the merging tree~\cite{barbay09compressed}.
\begin{LONG}
  This merging cost is actually equivalent to the cost of encoding,
  for each element of the sorted permutation, the subsequence of
  \emph{origin} of this element.
  Hence rebalancing the merging tree is equivalent to optimize a code
  for those addresses, and can be done via a Huffman
  tree~\cite{Huf52}.
\end{LONG}
The complexity can then be expressed more precisely in function of the
entropy of the relative sizes of the sorted subsequences identified,
where the \emph{entropy} $\entropy(\vPartition)$ of a sequence
$\vPartition=\langle n_1, n_2,\dots,n_r\rangle$ of $r$ positive
integers adding up to $n$ is
$\entropy(\vPartition)=\sum_{i=1}^r\frac{n_i}{n}\lg\frac{n}{n_i}$,
which satisfies $(r-1)\lg n \le n \entropy(\vPartition) \le n \lg r$
(by concavity of the logarithm).

\begin{JEREMY}
  Simplified the inequality (by multiplying each term by n), and
  realized it was different from the one Gonzalo used for STACS 2010:
  yours was saying $\frac{r\lg n}{n} \le \entropy(\vPartition) \le \lg
  r$ (by convexity of the logarithm)., which is equivalent to $r\lg n
  \le n \entropy(\vPartition) \le n \lg r$.  I did not have the time
  to check it and put the one from Gonzalo?

  Also, I corrected ``(by convexity of the logarithm)'': the logarithm
  is concave and not convex (check
  \url{http://en.wikipedia.org/wiki/Concave_function} if in doubt).
\end{JEREMY}

Barbay and Navarro~\cite{barbay09compressed} observed that each such
algorithm from the comparison model also describes an encoding of the
permutation $\pi$ that it sorts, so that it can be used to compress
permutations from specific classes to less than the information-theoretic
lower bound of $n\lg n$ bits.
Furthermore they used the similarity of the execution of the
merge-sort algorithm with a Wavelet Tree~\cite{grossi03high}, to
support the application of $\pi()$ and its inverse $\pi^{-1}()$ in
time logarithmic in the disorder of the permutation $\pi$ as measured
by $\nRuns$, $\nSRuns$, $\nSUS$, $\nSSUS$ and $\nSMS$, respectively) in
the worst case.
We summarize their technique in Lemma~\ref{lem:EntropySort} below, in
a way independent of the partition chosen for the permutation.

\begin{lemma}[Barbay et al.~\cite{barbay09compressed}]\label{lem:EntropySort}
  Given a partition $\vPartition$ of a permutation $\pi$ of $n$
  elements into $\nPartition$ sorted subsequences of respective
  lengths $\vPartition$,
  these subsequences can be merged with $n(1+\entropy(\vPartition))$
  comparisons on $\pi$ and $\Oh(\nPartition\lg\nPartition)$ internal
  operations, and this merging can be encoded
  using at most
  $(1+\entropy(\vPartition))(n+o(n))+\Oh(\nPartition\lg n)$
  bits so that it
  supports the computation of $\pi(i)$ and $\pi^{-1}(i)$ in time
  $\Oh(1+\lg\nPartition)$ in the worst case and in time
  $\Oh(1+\entropy(\vPartition))$ on average when $i$ is chosen
  uniformly at random in $[1..n]$.
\end{lemma}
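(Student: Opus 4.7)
The plan is to use a Huffman-shaped merging tree, in the spirit of Barbay and Navarro's reinterpretation of merge sort as a wavelet tree. First, given the lengths $n_1,\dots,n_{\nPartition}$ of the $\nPartition$ sorted subsequences, I would run Huffman's algorithm with the $n_i$ as weights; this takes $\Oh(\nPartition\lg\nPartition)$ internal operations using a standard priority queue. The resulting binary tree $T$ has $\nPartition$ leaves, one per subsequence, and by Huffman's classical bound its weighted external path length is at most $n + n\,\entropy(\vPartition)$.

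Next, I would merge bottom-up along $T$: each internal node $v$ takes the two sorted sequences arriving from its children and merges them with the usual two-finger scan. Since each element of $\pi$ participates in at most one comparison while passing each ancestor of its leaf, the total number of comparisons on $\pi$ is bounded by the weighted external path length of $T$, i.e.~$n(1+\entropy(\vPartition))$. All remaining work (priority-queue operations, pointer bookkeeping) fits in the $\Oh(\nPartition\lg\nPartition)$ internal cost of the Huffman construction.

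For the data structure, at each internal node $v$ I would store a bitmap $B_v$ of length equal to the number of elements merged at $v$, whose $k$-th bit records whether the $k$-th merged element came from the left or the right child. The whole object is then the wavelet tree of the permutation with shape $T$. The sum of the bitmap lengths equals the weighted external path length, so using the compressed rank/select representation of Raman et al.~\cite{raman07succinct} the bitmaps occupy $(1+\entropy(\vPartition))(n+o(n))$ bits; encoding the shape of $T$ and the boundaries of the $\nPartition$ input subsequences contributes the additional $\Oh(\nPartition\lg n)$ bits. The value $\pi(i)$ is then obtained by locating in $\Oh(1)$ time the leaf of the subsequence containing input position $i$ and walking upward to the root via iterated $\select$, a standard wavelet-tree traversal; $\pi^{-1}(j)$ is obtained by the symmetric downward walk from root position $j$ via iterated $\rank$. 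Each level of either walk costs $\Oh(1)$.

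The main obstacle is the $\Oh(1+\lg\nPartition)$ worst-case bound, since a plain Huffman tree on $\nPartition$ leaves can be as deep as $\Theta(\nPartition)$ for skewed weight distributions (e.g.~Fibonacci-like). To recover it while preserving the entropy-close space, I would use a depth-restricted Huffman construction such as the package-merge algorithm, which caps the tree depth at $\Oh(\lg\nPartition)$ at the price of an increase in weighted path length that is absorbed into the $o(n)$ redundancy factor of the bitmaps. The average-case bound is then a direct consequence of Huffman's theorem: the expected depth of a leaf under the distribution $n_i/n$ is at most $1+\entropy(\vPartition)$, so a query at a uniformly random position $i\in[1..n]$ traverses $\Oh(1+\entropy(\vPartition))$ levels on average.
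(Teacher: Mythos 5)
The paper does not actually prove this lemma --- it is imported from Barbay and Navarro~\cite{barbay09compressed} --- and your reconstruction matches their argument in all essentials: a Huffman-shaped wavelet tree over the merging process, with the total bitmap length bounded by the weighted external path length, the $\Oh(\nPartition\lg n)$ term paying for the tree shape and leaf labelling, and the worst-case depth (which you correctly identify as the one delicate point) repaired by a length-limited code whose excess path length is absorbed in the $o(n)$ slack. The only nit is that merging along the depth-restricted tree gives $n(1+\entropy(\vPartition))+o(n)$ comparisons rather than exactly $n(1+\entropy(\vPartition))$; either accept that slack or perform the merge along the unrestricted Huffman tree and reserve the depth-limited shape for the query data structure.
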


\section{Compressed Succinct Indexes for Range Minima}

We now explain how to improve on the result from
Lemma~\ref{lem:LRMTreeSpace} for permutations that are partially
ordered.
Without loss of generality, we consider only the case where the input
is a permutation of $[1..n]$: if this is not the case, we can sort the
elements in $A$ by rank, considering earlier occurrences of equal
elements as smaller.

\subsection{Strict Runs}

The simplest compressed data structure for RMQs uses an amount of
space which is a function of $\nsruns$, the number of strict runs in
$\pi$. 
It uses $2\nsruns+o(n)$ bits on permutations where $\nsruns\in o(n)$:

\begin{theorem}
  \label{th:nsruns}
  There is a non-systematic  compressed succinct index
  using $2\nsruns+\lceil\lg{n\choose \nsruns}\rceil +o(n)$ bits and
  supporting RMQs in zero data queries and $\Oh(1)$ index queries.
  \begin{FORJOURNAL}
    Add Construction time, in the text and in the proof.
  \end{FORJOURNAL}
\end{theorem}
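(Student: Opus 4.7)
The plan is to use two structures that together simulate enough of $\mathcal{T}_A$ to answer RMQs on $A$ without accessing it. First, I mark the run starts with a bit vector $B[1,n]$ having exactly $\nsruns$ ones, stored via Raman et al.~\cite{raman07succinct} in $\lceil\lg\binom{n}{\nsruns}\rceil+o(n)$ bits and supporting $\rank$ and $\select$ in $\Oh(1)$ time. Second, let $A'[1,\nsruns]$ be the (conceptual) array of first values of the $\nsruns$ strict runs of $A$; I store its LRM-tree $\mathcal{T}_{A'}$ in DFUDS together with the $\Oh(1)$-time LCA and child-on-path indices of Lemma~\ref{lem:LRMTreeSpace}, using $2\nsruns+o(\nsruns)$ bits. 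The two add up to the claimed bound.

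The key structural lemma I will prove is that $\mathcal{T}_A$ is fully determined by $B$ and $\mathcal{T}_{A'}$. Since $A$ is a permutation and each strict run $r$ occupies an interval $[v_r, v_r+l_r-1]$ of consecutive integers, these $\nsruns$ intervals are pairwise disjoint and cover $[1,n]$; hence whenever $v_{r'}<v_r$, one in fact has $v_{r'}+l_{r'}-1<v_r$. Feeding this into the definition of $\psv_A$ at the first position $s_r$ of run $r$ yields $\psv_A(s_r) = e_{\psv_{A'}(r)}$, where $e_r$ is the last position of run $r$ and we set $e_0:=0$. Combined with the trivial $\psv_A(i) = i-1$ inside a run, this places every run in $\mathcal{T}_A$ as a downward path whose top is attached either to the end of its parent run in $\mathcal{T}_{A'}$ or to the root.

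Given this, I answer $\rmq_A(i,j)$ by translating Fischer's LCA-based rule. I first compute $r_i = \rank_1(B,i)$ and $r_j = \rank_1(B,j)$: if $r_i = r_j$ the query lies inside a single strictly increasing run and the answer is $i$. Otherwise I compute $\ell' = \lca_{\mathcal{T}_{A'}}(r_i, r_j)$ in $\Oh(1)$: if $\ell' = r_i$, then the ancestor path of $j$ in $\mathcal{T}_A$ re-enters run $r_i$ at $e_{r_i}$ and passes through $i$, so the LCA of $i$ and $j$ in $\mathcal{T}_A$ is $i$ itself and the answer is $i$; if $\ell' \neq r_i$, the LCA in $\mathcal{T}_A$ is $e_{\ell'}$ and by Fischer's rule the answer is $s_{c_j} = \select_1(B, c_j)$, where $c_j$ is the child of $\ell'$ in $\mathcal{T}_{A'}$ on the path to $r_j$, obtained by a single DFUDS navigation. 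Each step is $\Oh(1)$ on the two stored structures and uses zero queries on $A$, so the query bounds follow. The main obstacle will be a clean proof of the structural lemma $\psv_A(s_r) = e_{\psv_{A'}(r)}$; everything else is a direct consequence of it plus routine succinct-structure machinery.
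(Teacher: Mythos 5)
Your proposal is correct and essentially the paper's own proof: the same bit-vector $B$ of run heads stored with Raman et al.\ plus the LRM-tree index on the head array $A'$, with the same space accounting. The only cosmetic difference is that you unroll the $\lca_{\mathcal{T}_{A'}}$-based implementation of $\rmq_{A'}$ and justify it via the structural identity $\psv_A(s_r)=e_{\psv_{A'}(r)}$, whereas the paper treats $\rmq_{A'}$ as a black box and tests $m<i$ directly --- the two tests and outputs coincide.
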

\begin{proof}
  We mark the beginnings of each runs in $A$ with a 1 in a bit-vector
  $B[1,n]$, and represent $B$ with the compressed succinct data
  structure from Raman et al.~\cite{raman07succinct}, using
  $\lceil\lg{n\choose \nsruns}\rceil + o(n)$ bits.
  Further, we define $A'$ as the (conceptual) array consisting of the
  heads of $A$'s runs ($A'[i]=A[\select_1(B,i)]$). 
  We build the LRM-Tree from Lemma~\ref{lem:LRMTreeSpace} on $A'$;
  using $2\nsruns(1+o(1))$ bits. 
  To answer a query $\rmq_A(i,j)$, compute $x=\rank_1(B,i)$ and
  $y=\rank_1(B,j)$, and compute $m'=\rmq_{A'}(x,y)$ as a range minimum
  in $A'$, and map it back to its position in $A$ by
  $m=\select_1(B,m')$. 
  Then if $m<i$, return $i$ as the final answer to $\rmq_A(i,j)$,
  otherwise return $m$.
  The correctness from this algorithm follows from the fact that only
  $i$ and the heads of strict runs that are \emph{entirely} contained
  in the query interval can be the range minimum; the former occurs if
  and only if the head of the run containing $i$ is smaller than all
  other heads in the query range.
\end{proof}

\begin{LONG}
  Obviously, this compressed data-structure is interesting only if
  $\nSRuns\in o(n)$. We explore in the following section a more
  general measure of partial order, $\nRuns$.
\end{LONG}

\subsection{General Runs}

The same idea as in Theorem~\ref{th:nsruns} applied to more general
runs yields another compressed succinct index for RMQs, potentially
smaller but this time requiring to access the input to answer RMQs.

\begin{theorem}
  \label{thm:nruns_systematic}
  There is a systematic compressed succinct  index 
  using $2\nruns+\lceil\lg{n\choose \nruns}\rceil +o(n)$ bits and
  supporting RMQs in $1$ data comparison and $\Oh(1)$ index
  operations.
  \begin{FORJOURNAL}
    Add Construction time, in the text and in the proof.
  \end{FORJOURNAL}
\end{theorem}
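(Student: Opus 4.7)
The plan is to mirror the construction from Theorem~\ref{th:nsruns}, but applying it to general runs rather than strict ones. Concretely, I would mark the beginning of each run of $A$ with a $1$ in a bit-vector $B[1,n]$, stored via the compressed representation of Raman et al.~\cite{raman07succinct} in $\lceil\lg{n\choose \nruns}\rceil + o(n)$ bits; define the conceptual array of run heads $A'$ by $A'[i]=A[\select_1(B,i)]$; and build on $A'$ the LRM-Tree index of Lemma~\ref{lem:LRMTreeSpace}, which uses $2\nruns+o(\nruns)\subseteq 2\nruns+o(n)$ bits and answers range minima on $A'$ in $\Oh(1)$ index operations and no data queries.

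To answer a query $\rmq_A(i,j)$, I would first compute $x=\rank_1(B,i)$ and $y=\rank_1(B,j)$, the indices in $A'$ of the runs containing $i$ and $j$. If $x=y$, the whole interval $A[i,j]$ sits inside a single ascending run and the answer is $i$, obtained without any data access. Otherwise, the candidates for the range minimum are $A[i]$ itself (the minimum of the suffix of run $x$ lying in $[i,j]$, since runs are ascending) and the heads indexed by $A'[x+1,y]$ (each such run being entirely contained in $[i,j]$, ascending, and hence minimized at its head). Accordingly, I would compute $m'=\rmq_{A'}(x+1,y)$ on the LRM-Tree, map it back to $m=\select_1(B,m')$, and return whichever of $i$ and $m$ carries the smaller value after a single data comparison between $A[i]$ and $A[m]$.

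The main point where this proof diverges from Theorem~\ref{th:nsruns}, and the only reason the comparison is now unavoidable, is that the \emph{strict}-runs property previously forced $A[i]$ to be smaller than every competing head in the query range via the consecutive-integer structure of the run, whereas a general ascending run can perfectly well contain values larger than later run heads (for instance, a run $\ldots,1,5,10,\ldots$ followed by a later run starting at $3$). Hence one access to $A[i]$ is genuinely necessary, and the candidate analysis above shows that one is sufficient. Altogether the redundancy is $2\nruns+\lceil\lg{n\choose \nruns}\rceil+o(n)$ bits, and each query costs $\Oh(1)$ index operations plus exactly one data comparison.
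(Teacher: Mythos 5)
Your proposal is correct and follows essentially the same construction and query algorithm as the paper's proof (bit-vector of run heads compressed via Raman et al., LRM-Tree index on the conceptual array $A'$ of heads, case split on $x=y$, and a single comparison between $A[i]$ and the mapped-back head minimum $m$); you even supply the candidate-set correctness argument that the paper leaves implicit. One cosmetic note: run $y$ need not be \emph{entirely} contained in $[i,j]$ as your parenthetical claims, but since its head lies in the range and the run is ascending, its head is still the only relevant candidate, so the argument stands.
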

\begin{proof}
  We build the same data structures as in Theorem~\ref{th:nsruns}, now
  using $2\nruns+\lceil\lg{n\choose \nruns}\rceil +o(n)$ bits. 
  To answer a query $\rmq_A(i,j)$, compute $x=\rank_1(B,i)$ and
  $y=\rank_1(B,j)$. 
  If $x=y$, return $i$.
  Otherwise, compute $m'=\rmq_{A'}(x+1,y)$, and map it back to its
  position in $A$ by $m=\select_1(B,m')$.
  The final answer is $k$ if $A[k]<A[m]$, and $m$ otherwise.
\end{proof}

To achieve a non-systematic compressed succinct index whose space
usage is a function of $\nruns$, we need more space and a more heavy
machinery, as shown next.
The main idea is that a permutation with few runs results in a
compressible LRM-Tree, where many nodes have out-degree~1.

\begin{theorem}
  \label{thm:runs}
  There is a non-systematic compressed succinct index
  using $2\nruns\lg n+o(n)$ bits, and 
  supporting RMQs in zero data comparisons and $\Oh(1)$ index
  operations.
  \begin{FORJOURNAL}
    Add Construction time, in the text and in the proof.
  \end{FORJOURNAL}
\end{theorem}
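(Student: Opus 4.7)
The first step is to pin down the skeleton of the LRM-Tree $\mathcal{T}_A$: when $A$ has $\nRuns$ runs, $\mathcal{T}_A$ has \emph{exactly} $\nRuns$ leaves and hence at most $\nRuns-1$ branching nodes. Indeed, a position $i\in[1,n]$ is a leaf iff no $j>i$ has $\psv_A(j)=i$; if $A[i]<A[i+1]$ then $\psv_A(i+1)=i$ and $i$ is not a leaf, whereas if $i$ is a run-end then any $j>i$ with $\psv_A(j)=i$ would force $A[i]<A[j]\le A[i+1]<A[i]$, a contradiction. Thus the leaves of $\mathcal{T}_A$ are exactly the $\nRuns$ run-ends, and the standard inequality (\#leaves $\ge 1+$ \#branching) bounds the branching nodes. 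Moreover, any unary internal node must satisfy $A[i]<A[i+1]$ (else it would itself be a leaf), so its only child is $i+1$; consequently each maximal unary chain of $\mathcal{T}_A$ consists of \emph{consecutive} $A$-positions inside a single run.

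Building on this, the plan is to contract every maximal unary chain of $\mathcal{T}_A$ into a single edge, obtaining a reduced tree $\mathcal{T}'$ of $\Oh(\nRuns)$ nodes (the surviving branching nodes, leaves and root). I would encode $\mathcal{T}'$ with DFUDS together with the $o(\nRuns)$-bit indices supporting \lca{} and level-ancestor queries in $\Oh(1)$ time, using $\Oh(\nRuns)$ bits. I would then store (a) a compressed bit-vector $B[1,n]$ of Raman et al.~marking the $\Oh(\nRuns)$ surviving positions, in $\Oh(\nRuns\lg(n/\nRuns))+o(n)$ bits; (b) the permutation mapping the preorder rank of a $\mathcal{T}'$-node to the rank of its $A$-position among the marked positions, in $\Oh(\nRuns\lg\nRuns)$ bits; and (c) an array $\tau$ giving, for each non-root $c'$ of $\mathcal{T}'$, the $A$-position of the top of the unary chain just above $c'$ in $\mathcal{T}_A$ (or $c'$ itself when that chain is empty), in $\nRuns\lg n$ bits. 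Summing these yields $2\nRuns\lg n+o(n)$ bits.

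The answer to $\rmq_A(i,j)$ is then computed entirely within $\mathcal{T}'$. First, use $B$ to obtain the representatives $r_i$, $r_j$, defined as the smallest marked positions $\ge i$, $\ge j$, and translate them into $\mathcal{T}'$-nodes via (b). Set $c=\lca_{\mathcal{T}'}(r_i,r_j)$. If $c=r_i$, return $i$: the two structural facts above force $i$ to be an ancestor of $j$ in $\mathcal{T}_A$, so $\rmq_A(i,j)=i$. Otherwise $\lca_{\mathcal{T}_A}(i,j)$ is branching and hence survives in $\mathcal{T}'$ as $c$; locate in $\Oh(1)$ time the child $c''$ of $c$ on the path to $r_j$ (the level-ancestor of $r_j$ at depth $\mathrm{depth}(c)+1$ in $\mathcal{T}'$) and return $\tau(c'')$, which is exactly the top of the contracted chain and therefore the child of the true LCA on the path to $j$ in $\mathcal{T}_A$.

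The hard part will be the case analysis linking $\mathcal{T}'$ with $\mathcal{T}_A$: showing that $r_i=r_j$ or $c=r_i$ really implies $i$ is an ancestor of $j$; that when $r_i$ and $r_j$ lie in different chains the true LCA is necessarily branching (so it survives contraction and coincides with $c$); and that $\tau(c'')$ returns the correct child of the LCA rather than a deeper node on the chain. Each of these follows from the two structural lemmas about leaves and unary chains, but the bookkeeping between $A$-positions, marks in $B$, preorder ranks of $\mathcal{T}'$, and chain tops must be spelled out carefully.
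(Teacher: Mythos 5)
Your route is genuinely different from the paper's. The paper does not contract anything explicitly: it keeps the full $(n+1)$-node LRM-Tree and feeds it to the tree-entropy compressor of Jansson et al., then bounds the tree entropy via the degree distribution --- exactly your structural observation that $n_0=\nruns$ and $n_1\ge n-2\nruns$ --- using the multinomial inequality $\binom{n}{n_0\,n_1\,\cdots\,n_{n-1}}\le n!/(n-2\nruns)!\le n^{2\nruns}$, so that $nH^*(\mathcal{T}_A)\le 2\nruns\lg n$; RMQs are then answered by the unmodified LCA machinery of Lemma~\ref{lem:LRMTreeSpace}. Your explicit path-contracted tree $\mathcal{T}'$ plus the bit-vector $B$ is a valid alternative, and your query logic checks out: run-ends are exactly the leaves, unary chains are consecutive positions, ancestor relations among surviving nodes are preserved by contraction, and the case analysis ($c=r_i$ versus the branching-LCA case) is sound.

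The concrete problem is the space accounting, which as written does not yield $2\nruns\lg n+o(n)$. The array $\tau$ is indexed by \emph{all} non-root nodes of $\mathcal{T}'$, i.e.\ the $\nruns$ leaves plus up to $\nruns-1$ branching nodes, so it costs about $2\nruns\lg n$ bits, not $\nruns\lg n$; and components (a) and (b) contribute $\Theta(\nruns\lg(n/\nruns))$ and $\Theta(\nruns\lg\nruns)$ bits respectively, which sum to another $\Theta(\nruns\lg n)$ that is not $o(n)$ when $\nruns$ is large. Altogether you get roughly $4\nruns\lg n+o(n)$. The fix is to notice that both $\tau$ and the permutation (b) are redundant: since positions $0,1,\dots,n$ already list $\mathcal{T}_A$ in preorder and contraction preserves this order, the preorder rank of a surviving node in $\mathcal{T}'$ \emph{is} its rank among marked positions, so (b) is the identity; and because the chain above a surviving node $c'$ occupies exactly the unmarked positions between $c'$ and the nearest marked position $m<c'$, one has $\tau(c')=m+1$ (or $c'$ when $m=c'-1$), computable from $B$ in $\Oh(1)$ time. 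Dropping $\tau$ and (b) leaves $\lceil\lg\binom{n}{\Oh(\nruns)}\rceil+\Oh(\nruns)+o(n)\le 2\nruns\lg n+o(n)$ bits, which restores (indeed improves) the claimed bound.
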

\begin{proof}
  We build the LRM-Tree $\mathcal{T}_A$ from Sect.~\ref{sect:lrm}
  directly on $A$, and then compress it with the tree-compressor due to
  Jansson et al.~\cite{jansson07ultra}. 

  To see that this results in the claimed space, let $n_k$ denote the
  number of nodes in $\mathcal{T}_A$ with out-degree $k\ge 0$.
  Let $(i_1,j_1),\dots,(i_\nruns,j_\nruns])$ be an encoding of the
  runs in $A$ as (start, end), and
  look at a pair $(i_x,j_x)$. 
  We have $\psv_A(k)=k-1$ for all $k\in[i_x+1..j_x]$, and so the nodes
  in $[i_x..j_x]$ form a path in $\mathcal{T}_A$, possibly interrupted
  by branches stemming from heads $i_y$ of other runs $y>x$ with
  $\psv_A(i_y) \in [i_x..j_x-1]$. 
  Hence $n_0=\nruns$, and $n_1\ge n-\nruns-(\nruns-1) \ge n-2\nruns$,
  as in the worst case the values $\psv_A(i_y)$ for
  $i_y\in\{i_2,i_3,\dots,i_\nruns\}$ are all different.

  Now $\mathcal{T}_A$, with degree-distribution $n_0,\dots,n_{n-1}$,
  is compressed into $nH^*(\mathcal{T}_A) + O\left(\frac{n\lg^2n}{\lg
      n}\right)$ bits, where
  $$
  nH^*(\mathcal{T}_A) = \lg \left(\frac{1}{n} \binom{n}{n_0 n_1 \dots
      n_{n-1}}\right)
  $$
  is the so-called \emph{tree entropy}~\cite{jansson07ultra} of
  $\mathcal{T}_A$.
  This representation supports all navigational operations in
  $\mathcal{T}_A$ in constant time, and in particular those required
  for Lemma~\ref{lem:LRMTreeSpace}.
  A rough inequality yields a bound on the number of possible
  LRM-Trees:
  $$      
  \binom{n}{n_0 n_1 \dots n_{n-1}}  
  =    \frac{n!}{n_0! n_1! \dots n_{n-1}!}
  \le  \frac{n!}{n_1!}
  \le  \frac{n!}{(n-2\nruns)!}
  \le  n^{2\nruns}\ ,
  $$
  \begin{INUTILE}
    \begin{eqnarray*}
      \binom{n}{n_0 n_1 \dots n_{n-1}} & = & \frac{n!}{n_0! n_1! \dots n_{n-1}!}\\
      & \le & \frac{n!}{n_1!}\\
      & \le & \frac{n!}{(n-2\nruns)!}\\
      & \le & n^{2\nruns}\ , \\ 
    \end{eqnarray*}
  \end{INUTILE}
  from which one easily bounds the space usage of the compressed
  succinct index:
  $$
  nH^*(T)  
  \le  \lg\left(\frac{1}{n} n^{2\nruns}\right)
  =  \lg\left(n^{2\nruns-1}\right)
  =  (2\nruns-1) \lg n
  \le  2\nruns \lg n\ . 
  $$
  \begin{INUTILE}
    \begin{eqnarray*}
      nH^*(T) & \le & \lg\left(\frac{1}{n} n^{2\nruns}\right)\\
      & = & \lg\left(n^{2\nruns-1}\right)\\
      & = & (2\nruns-1) \lg n\\
      & \le & 2\nruns \lg n\ . 
    \end{eqnarray*}
  \end{INUTILE}
  Adding the space required to index the structure of Jansson et
  al.~\cite{jansson07ultra}  yields the desired space.
\end{proof}

\begin{INUTILE}
  \subsection{Range Quantile Queries}
  We now turn our attention to range quantile queries.

  \begin{theorem}
    \label{thm:runs}
    There is a non-systematic preprocessing scheme for
    $\Oh(\lg\nssus)$-RQQs that needs $n(2+H(\ssus))(1+o(1))$ bits. If
    $p$ is chosen uniformly at random in $[1..n]$ then the average time
    to answer $\rqq_A(\cdot,\cdot,p)$ is $\Oh(1 + H(\ssus))$.
  \end{theorem}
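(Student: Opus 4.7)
The plan is to adapt the Huffman-shaped wavelet tree of Barbay and Navarro, summarized in Lemma~\ref{lem:EntropySort}, so that it answers range quantile queries, taking advantage of the strictness of Shuffled Up Sequences. First, partition $\pi$ greedily into the minimum number $\nssus$ of strict shuffled up subsequences. Since each SSUS consists of consecutive integers and the values form a permutation of $[1..n]$, the $\nssus$ value ranges form a partition of $[1..n]$ into pairwise-disjoint intervals; relabel the SSUS so that these intervals appear in increasing order, i.e.\ the labels are monotone in value, and store the starting values $a_1 < a_2 < \cdots < a_{\nssus}$ in $\Oh(\nssus \lg n)$ bits. On the sequence $S[1..n]$ of SSUS labels, build the Huffman-shaped wavelet tree with leaf weights equal to the SSUS lengths, exactly as in Lemma~\ref{lem:EntropySort}.

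To answer $\rqq_A(i,j,p)$, descend this wavelet tree from the root while maintaining a current position range $[i',j']$, initialized to $[i,j]$, and a current rank $p'$ initialized to $p$. At an internal node with bit-vector $B_v$, compute via two constant-time $\rank_0$ operations on $B_v$ the number $z$ of positions of $[i',j']$ routed to the left child; if $z \ge p'$, descend left and project $[i',j']$ via $\rank_0$, otherwise replace $p'$ by $p' - z$ and descend right, projecting via $\rank_1$. This is the standard wavelet-tree quantile descent. Upon arriving at the leaf labeled $k$, the monotone value-ordering of the leaves together with the strictness of SSUS~$k$ implies that the $p$-th smallest value of $A[i,j]$ is exactly $a_k + p' - 1$.

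For the analysis, the Huffman shape gives worst-case leaf-depth $\Oh(\lg \nssus)$ and expected leaf-depth (weighted by the SSUS lengths) $\Oh(1 + H(\ssus))$; since each level costs only a constant number of rank operations, the claimed $\Oh(\lg \nssus)$ worst-case and $\Oh(1 + H(\ssus))$ average times follow immediately. For space, Lemma~\ref{lem:EntropySort} bounds the wavelet tree by $(1 + H(\ssus))(n + o(n)) + \Oh(\nssus \lg n)$ bits; adding an $n$-bit vector marking the heads of the $\nssus$ SSUS (needed to recover positions from leaf-local information and to translate the boundaries $i$ and $j$ between the original array and its per-SSUS views) together with its $o(n)$-bit rank/select index absorbs the extra $n$ summand, yielding the advertised $n(2 + H(\ssus))(1 + o(1))$ bits.

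The main obstacle is justifying the value-monotone relabeling: one has to verify that the greedy Patience-sorting-style decomposition into the minimum number of strict shuffled up sequences is compatible with ordering the SSUS by their starting values, and that this relabeling preserves the length distribution so that the Huffman tree still attains entropy $H(\ssus)$. Once this is established, correctness of the quantile descent is immediate from the value-monotone left-to-right ordering of leaves, and the time and space bounds reduce to a direct application of Lemma~\ref{lem:EntropySort} combined with the standard quantile traversal on wavelet trees.
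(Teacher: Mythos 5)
There is a genuine gap, and it sits exactly where you flagged it: the value-monotone relabeling is incompatible with a Huffman shape. The quantile descent is correct only if the leaves of the wavelet tree appear in increasing value order from left to right (so that $z\ge p'$ really means the $p'$-th smallest element was routed to the left child). A Huffman tree places its leaves in an order dictated by the weights, not by a prescribed alphabet order, and you cannot in general permute its leaves into value order while keeping the depths: a depth multiset satisfying Kraft's equality need not be realizable in an arbitrary left-to-right sequence (for instance, depths $(2,1,2)$ in that order admit no ordered binary tree), which is precisely why optimal \emph{alphabetic} codes are a separate notion that may lose up to one bit per symbol against the entropy. The paper's proof resolves this by shaping the tree with the Hu--Tucker algorithm on the weights $\ssus$ (building on the Gfeller--Sanders quantile traversal): Hu--Tucker preserves the left-to-right value order of the leaf intervals $[i_k..j_k]$ and guarantees total weighted leaf depth $L < n(2+H(\ssus))$. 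That alphabetic penalty is where the ``$2$'' in the theorem's space bound comes from --- not from an auxiliary $n$-bit marker vector as in your accounting. (Your marker vector is in fact unnecessary: the boundaries $i,j$ are projected down the tree by \rank{} operations on the node bit-vectors, and strictness gives the leaf-local answer directly, just as you observed with $a_k+p'-1$.) Apart from this, your construction --- value-interval leaves thanks to strictness, standard quantile descent, constant work per level, average depth $\Oh(1+H(\ssus))$ --- matches the paper's argument.

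A secondary inaccuracy: neither a Huffman nor a Hu--Tucker tree has worst-case depth $\Oh(\lg\nssus)$ by itself; with exponentially skewed weights the depth can be $\Theta(\nssus)$. To honor the claimed $\Oh(\lg\nssus)$ worst case one needs a depth-restricted variant of the weighted tree, as Barbay and Navarro use to obtain the $\Oh(1+\lg\nPartition)$ worst case of Lemma~\ref{lem:EntropySort}, at the cost of $o(1)$ extra bits per element in the weighted path length. With Hu--Tucker (in its depth-limited form) in place of Huffman, and the space re-accounted as $L(1+o(1)) < n(2+H(\ssus))(1+o(1))$ bits plus $\Oh(\nssus\lg n)$ bits for the interval endpoints, your argument goes through and coincides with the paper's.
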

  \begin{proof}
    Our basis is the approach of Gfeller and Sanders
    \cite{gfeller09towards}, also described by Gagie et
    al.~\cite{gagie09range}, that represents $A$ by a binary wavelet
    tree \cite{grossi03high}. The difference is that the shape of the
    wavelet tree is not perfectly balanced, but determined by the tree
    resulting from running the Hu-Tucker algorithm \cite{hu71optimum}
    on $\ssus$. In particular, let
    $[i_1..j_1],\dots,[i_\nssus..j_\nssus]$ be the shuffled up-sequences
    with $i_0=1, i_x=j_{x-1}+1$ for $2\le x\le\nssus$, and
    $j_\nssus=n$. Then $\ssus = \langle y_1,\dots,y_\nssus \rangle$
    with $y_x=j_x-i_x+1$, and the Hu-Tucker algorithm produces a
    binary tree $T$ where the $k$'th leftmost leaf $\ell_k$ represents
    the interval $[i_k..j_k]$, and $L=\sum y_k\textit{depth}(\ell_k)$
    is minimal. Moreover, $L < n(2 + H(\ssus))$.

    We enhance $T$ with the usual structures of a wavelet tree: every
    node $v$ (implicitly) stores an subsequence $A_v[1,n_v]$ of the
    original array $A$ (the root stores the complete array
    $A[1,n]$). An associated bit-vector $B_v[1,n_v]$ indicates which
    characters from $A_v$ can be found in the left or right child of
    $v$: $B_v[i]=0$ iff $A_v[i] \in [i_k..j_k]$, and the $k$'th
    leftmost leaf is in the subtree rooted at $v$'s left
    child. Bet-vectors $B_v$ also define the sequences of their
    respective children in a natural way. This process continues
    recursively until we have reached a leaf of $T$, where no
    information is stored.

    Now a query $\rqq_A(i,j,p)$ proceeds as in the original algorithm
    \cite{gfeller09towards}, working its way down the tree while
    adapting $i$, $j$, and $k$ accordingly. Now suppose we have
    reached a leaf $\ell$ with values $i'$, $j'$, and $p'$. There, we
    search for the position of the $p'$'th smallest element in
    $A_\ell$. But $A_\ell=[i_k..j_k]$ for some $k$, so the answer is
    simply $i'+p'-1$.

    Because the depth of $T$ is at most $\nssus$, the claim on
    worst-case time follows. The claim on average time follows from
    the fact that the average depth of $T$ is given by $H(\ssus)$.
  \end{proof}

\end{INUTILE}

\section{Sorting Permutations}
\label{sec:sort-perm}

Barbay and Navarro~\cite{barbay09compressed} showed how to use the
decomposition of a permutation $\pi$ in $\nRuns$ ascending consecutive
runs of respective lengths $\vRuns$ to sort adaptively to their
entropy $\entropy(\vRuns)$.
Those runs entirely partition the LRM-Tree of $\pi$ into $\nRuns$
paths, each starting at some branching node of the tree, and ending at
a leaf: one can easily draw this partition by iteratively tagging the
leftmost maximal untagged up-from-leaf path of the LRM-Tree.

Yet, any partition of the LRM-Tree into down paths \begin{LONG}(so
  that the values traversed by the path are increasing)\end{LONG} can
be used to sort $\pi$.
Since there are exactly $\nRuns$ leaves in the LRM-Tree, no such
partition can be smaller than the partition of $\pi$ into ascending
consecutive runs. 
But in the case where some of those partitions are more imbalanced
than the original one, this yields a partition of smaller entropy, and
hence a faster sorting algorithm.
We define a family of such partitions:
\begin{definition}[LRM-Partition]\label{def:LRMTreePartitioning}
  A \emph{LRM-Partition} of a permutation $\pi$ with LRM-Tree
  $\mathcal{T}_\pi$ is defined recursively as follows.
  One subsequence is the ``spinal chord'' of
  $\mathcal{T}_\pi$, one of the longest root-to-leaf paths in
  $\mathcal{T}_\pi$.
  Removing this spinal chord of $\mathcal{T}_\pi$ leaves a forest of
  more shallow trees.
  The rest of the partition is obtained by computing and concatenating
  some LRM-partitions of those trees.
\end{definition}
\begin{LONG}
  This definition does not define a unique partition, but a family of
  partitions: there might be several ways to choose the ``spinal
  chord'' of each subtree when several nodes have the same depth, and
  of course the order of the subsequences in the partition does not
  matter either.
  Yet, there will always be $\nruns$ many subsequences in the
  partition, and any LRM-Partition is never worse and often better (in
  terms of sorting \emph{and} compressing) than the the original
  Run-Partition.
  The situation is similar to the one of $\entropy(\vSUS)$ versus
  $\nSUS$: it is easier to minimize $\nSUS$ (resp. $\nRuns$) than
  $\entropy(\vSUS)$ (resp. $\entropy(\vLRM)$), yet one can take
  advantage of the entropy of a partition minimizing $\nSUS$ (resp. of
  a LRM-Partition).
\end{LONG}

Note that each down-path of the LRM-Tree corresponds to an ascending
subsequence of $\pi$, but not all ascending subsequences correspond to
down-paths of the LRM-Tree, hence partitioning optimally $\pi$ into
$\nSUS$ ascending subsequences potentially yields smaller partitions,
or ones of smaller entropy: the LRM-partitions seem inferior to
SUS-partitions.
Yet, the fact which make LRM-Partitions particularly interesting is
that it can be computed in linear time (which is not true for
SUS-Partitions):

\begin{lemma}\label{lem:LRMTreePartitioning}
  There is an algorithm finding one of the LRM-Partitions of a
  permutation $\pi$ of size $n$ in $\Oh(n)$ data comparisons.
\end{lemma}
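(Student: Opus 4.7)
The plan is to separate the work into two stages: first build the LRM-Tree $\mathcal{T}_\pi$, and then carry out the entire partitioning on the tree itself, so that no further data comparisons are needed beyond those already spent on the tree. By Lemma~\ref{lem:lrmConstruction}, the first stage costs at most $2n$ data comparisons, so it suffices to show that the partitioning stage runs in $O(n)$ time using only the shape of $\mathcal{T}_\pi$.

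Once $\mathcal{T}_\pi$ is in memory, I compute in a single post-order traversal the height $h(v)$ of every node $v$, i.e.\ the length of the longest downward path from $v$ to a leaf in its subtree; this is done by the standard recurrence $h(v) = 1 + \max_{c \text{ child of } v} h(c)$ (with $h(v)=0$ for leaves). During the same traversal, for every internal node $v$ I store a \emph{heavy-child} pointer to the child $c$ attaining this maximum (ties broken arbitrarily). No data comparison is performed at this stage, since we only manipulate integers derived from the tree shape.

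I then extract the paths of the partition greedily: every node that is not the heavy child of its parent (including the root itself) is declared the top of a path, and from such a top node I descend by following heavy-child pointers until reaching a leaf, emitting the visited nodes as one ascending subsequence of the partition. Each node is visited exactly once across all extractions, so the total cost is $O(n)$. The output is a LRM-Partition in the sense of Definition~\ref{def:LRMTreePartitioning}: the path starting at the root follows heavy-child pointers, so by the recurrence defining $h$ it is a root-to-leaf path of length $h(\text{root})$, i.e.\ a spinal chord; and for every non-heavy child $v'$ of a node on that chord, the subtree rooted at $v'$ inherits its own heavy-child pointers, so the same argument applies recursively inside it.

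The main thing to check is this last inductive claim, namely that the heavy-child pointers computed \emph{globally} once at the start still define, inside every subtree obtained after removing earlier spinal chords, a longest root-to-leaf path of that subtree. This is immediate, because $h(\cdot)$ and the heavy-child of a node $v$ depend only on the subtree rooted at $v$, which is untouched by removing ancestors or sibling branches. Combining the two stages, the overall algorithm uses at most $2n$ data comparisons and $O(n)$ additional time.
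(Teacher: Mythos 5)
Your proof is correct, and it reaches the goal by a somewhat different route than the paper. Both arguments implement the same recursion forced by Definition~\ref{def:LRMTreePartitioning} (repeatedly peel off a longest root-to-leaf path and recurse on the remaining subtrees), but they differ in how they find these paths in overall linear time. The paper writes the node depths in preorder into an array $D$, indexes $D$ for range \emph{maximum} queries via Lemma~\ref{lem:LRMTreeSpace}, locates the deepest node of each (sub)tree as a constant-time query on the interval of $D$ corresponding to that subtree, and walks \emph{up} to the subtree root; its correctness hinges on the observation that subtrees are preorder intervals and that depths within a subtree differ from the true local depths by a constant offset. You instead compute subtree heights and heavy-child pointers in one post-order pass and walk \emph{down} from each non-heavy node, which is exactly a heavy-path decomposition with ``heavy'' meaning ``of maximal height.'' Your key observation --- that $h(v)$ and the heavy child of $v$ depend only on the subtree rooted at $v$ and are therefore unaffected by the removal of ancestors and sibling branches --- plays the same role as the paper's offset argument, and you correctly note that the components left after removing a spinal chord are precisely the subtrees hanging off it, so the globally computed pointers remain valid recursively. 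The trade-off: your version is more elementary and self-contained (a plain DFS, no range-query index), whereas the paper's version reuses machinery it has already built and needs anyway for its other results. Both versions charge the $2n$ data comparisons entirely to the tree construction of Lemma~\ref{lem:lrmConstruction} and perform the partitioning itself with zero further data comparisons in $\Oh(n)$ internal operations, which is all the lemma requires.
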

\begin{proof}
  Definition~\ref{def:LRMTreePartitioning} is constructive: we are
  only left to show that this algorithm can be executed in linear
  time.
  Having built $\mathcal{T}_A$ using Lemma \ref{lem:lrmConstruction}
  in $2n$ comparisons, we first set up an array $D$ containing the
  \emph{depths} of the nodes in $\mathcal{T}_A$, listed in
  preorder. 
  We then index $D$ for range maximum queries in linear time using
  Lemma \ref{lem:LRMTreeSpace}.

  Now the deepest node in $\mathcal{T}_A$ can be found by a range
  maximum query over the whole array, supported in constant time.
  From this node, we follow the path to the root, and save the
  corresponding nodes as the first subsequence. 
  This divides $A$ into disconnected subsequences, which can be
  processed recursively using the same algorithm, as the nodes in any
  sub-tree of $\mathcal{T}_A$ form an interval in $D$. 
  We do so until all elements in $A$ have been assigned to a
  subsequence. 

  Note that in the recursive steps, the numbers in $D$ are not anymore
  the depths of the corresponding nodes in the remaining
  sub-trees. 
  But as all depths listed in $D$ differ by the same offset from their
  depths in any connected subtree, this does not affect the result of the range maximum query.
\end{proof}

Given a LRM-Partition of the permutation $\pi$, sorting $\pi$ is just
a matter of applying Lemma~\ref{lem:EntropySort}:

\begin{theorem}\label{th:LRMTreeSorting}
  Let $\pi$ be a permutation of size $n$.
  Identifying its $\nRuns$ runs by building the LRM-Tree through
  Lemma~\ref{lem:lrmConstruction},
  obtaining a LRM-Partition of subsequences of respective lengths
  $\vLRM$ through Lemma~\ref{lem:LRMTreePartitioning},
  and merging the subsequences of this partition through
  Lemma~\ref{lem:EntropySort},
  results in an algorithm sorting $\pi$ in a total of
  $n(3+\entropy(\vLRM))$ data comparisons and 
  $\Oh(n+\nRuns\lg\nRuns)$ internal operations,
  accounting for a total time of 
  $\Oh(n(1+\entropy(\vLRM)))$.
\end{theorem}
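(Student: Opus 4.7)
The plan is a careful composition of the three lemmas cited in the statement, taking care not to double-count comparisons. First I would observe that the LRM-Partition algorithm of Lemma \ref{lem:LRMTreePartitioning} performs data comparisons \emph{only} while building the LRM-Tree itself: once $\mathcal{T}_\pi$ is constructed, the partition phase operates on the depth array $D$ and on range-maximum queries on it, neither of which accesses $\pi$. Hence the only two sources of data comparisons in the combined algorithm are the LRM-Tree construction and the final merge.

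For the data-comparison count, I would add the $2n$ comparisons of Lemma \ref{lem:lrmConstruction} to the $n(1+\entropy(\vLRM))$ comparisons of Lemma \ref{lem:EntropySort}. The latter applies with $\nPartition = \nRuns$, because a LRM-Partition contains exactly one subsequence per leaf of $\mathcal{T}_\pi$, and the number of leaves of $\mathcal{T}_\pi$ equals the number $\nRuns$ of ascending runs of $\pi$ (each leaf terminates a maximal increasing down-path). The sum $2n + n(1+\entropy(\vLRM)) = n(3+\entropy(\vLRM))$ matches the claim.

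For internal operations, I would bundle the $\Oh(n)$ steps needed to build $\mathcal{T}_\pi$, to produce the depth array $D$, to index $D$ for range-maximum queries via Lemma \ref{lem:LRMTreeSpace}, and to iteratively extract the spinal chords, together with the $\Oh(\nRuns\lg\nRuns)$ internal operations from the merging phase of Lemma \ref{lem:EntropySort}. Adding these yields the stated bound $\Oh(n+\nRuns\lg\nRuns)$.

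The only non-trivial arithmetic step — and thus the main (small) obstacle — is showing that the overall running time collapses to $\Oh(n(1+\entropy(\vLRM)))$. For this I would invoke the inequality $(\nRuns-1)\lg n \le n\entropy(\vLRM)$, stated after the definition of $\entropy$ in Section \ref{sec:adapt-sort-compr}, which gives $\nRuns\lg\nRuns \le \nRuns\lg n \le n\entropy(\vLRM) + \lg n = \Oh(n(1+\entropy(\vLRM)))$. Therefore both the data-comparison term $n(3+\entropy(\vLRM))$ and the internal-operation term $\Oh(n+\nRuns\lg\nRuns)$ fit within the announced total-time bound, completing the proof.
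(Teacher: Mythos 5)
Your proposal is correct and follows essentially the same route as the paper's proof: sum the $2n$ comparisons of Lemma~\ref{lem:lrmConstruction}, the $\Oh(n)$ internal operations of Lemma~\ref{lem:LRMTreePartitioning}, and the $n(1+\entropy(\vLRM))$ comparisons plus $\Oh(\nRuns\lg\nRuns)$ internal operations of Lemma~\ref{lem:EntropySort}, then absorb $\nRuns\lg\nRuns$ into $\Oh(n(1+\entropy(\vLRM)))$ via the entropy lower bound. Your added justifications (that the partition phase touches only the depth array, and that $\nPartition=\nRuns$ because the LRM-Tree has exactly $\nRuns$ leaves) are details the paper states elsewhere but leaves implicit in its proof.
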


\begin{proof}
  Lemma~\ref{lem:lrmConstruction} builds the LRM-Tree in $2n$ data
  comparisons,
  Lemma~\ref{lem:LRMTreePartitioning} extract from it a LRM-Partition
  in $\Oh(n)$ internal operations, and
  Lemma~\ref{lem:EntropySort} merges the subsequences of the
  LRM-Partition in $n(1+\entropy(\vLRM))$ data comparisons and
  $\Oh(\nRuns\lg\nRuns)$ internal operations.
  The sum of those complexities yields $n(3+\entropy(\vLRM))$ data
  comparisons and $\Oh(n+\nRuns\lg\nRuns)$ internal operations.

  Since $\nRuns\lg\nRuns < n\entropy(\vLRM)+\lg\nRuns$ by concavity of
  the logarithm, the total time complexity is in
  $\Oh(n(1+\entropy(\vLRM)))$.
\end{proof}

Since by construction $\entropy(\vLRM)\leq\entropy(\vRuns)$, this
result naturally improves on the adaptive merge sort algorithm for runs \cite{barbay09compressed}.
However, $\entropy(\vSUS)$ can be arbitrarily smaller than
$\entropy(\vLRM)$: this means that, in the worst case over instances
of fixed $n$ and $\entropy(\vSUS)$, SUS sorting has a strictly better
asymptotical complexity than LRM sorting; while, in the worst case
over instances of fixed $n$ and $\entropy(\vLRM)$, SUS sorting has the
same asymptotical complexity than LRM sorting.
\begin{JEREMY}
  ``instances of fixed $n$ and $\entropy(\vLRM)$'' is not well
  defined, since $\entropy(\vLRM)$ is not uniquely defined...
\end{JEREMY}

Yet, on instances where $\entropy(\vLRM)<2\entropy(\vSSUS)-1$,
LRM-Sorting actually performs \emph{less} data comparisons (and
potentially more index operations) than SUS-Sorting.
Barbay~et al.'s improvement~\cite{barbay09compressed} of SUS-Sorting
performs $2n(1+\entropy(\vSUS)$ data comparisons, decomposed into
$n(1+\entropy(\vSUS))$ data comparisons to compute a partition $\pi$
into $\nSUS$ sub-sequences which is minimal in size, if not
necessarily in entropy; and
$n(1+\entropy(\vSUS))$ data comparisons (and $\Oh(n+\nSUS\lg\nSUS)$
internal operations) to merge the subsequences into a single ordered
one.
On the other hand, the combination of Lemma~\ref{lem:lrmConstruction}
with Lemma~\ref{lem:LRMTreePartitioning} yields a LRM-Partition in
$2n$ data comparisons and $\Oh(n)$ index operations;
which is then merged in $n(1+\entropy(\vLRM))$ data comparisons (and
$\Oh(n+\nRuns\lg\nRuns)$ internal operations) to merge the
subsequences into a single ordered one.
Comparing the $2n(1+\entropy(\vSUS)$ data comparisons of SUS-Sorting
with the $n(3+\entropy(\vLRM))$ data comparisons of LRM-Sorting shows
that on instance where $\entropy(\vLRM)<2\entropy(\vSUS)-1$,
LRM-Sorting performs less data comparisons (only potentially twice
less, given that $\entropy(\vSUS)\leq\entropy(\vLRM)$.
This comes to the price of potentially more internal operations:
SUS-Sorting performs $\Oh(n+\nSUS\lg\nSUS)$ such ones while
LRM-Sorting performs $\Oh(n+\nRuns\lg\nRuns)$ such ones, and
$\nSUS\leq\nRuns$ by definition.

When considering external memory, this is important in the case where
the data does not fit in main memory while the internal
data-structures (using much less space than the data itself) of the
algorithms do: then data comparisons are much more costly than
internal operations.
Furthermore, we show in the next section that this difference of
performance implies an even more meaningful difference in the size of
the permutation encodings corresponding to the sorting algorithms.

\section{Compressing Permutations}
\label{sec:compr-perm}

As shown by Barbay and Navarro \cite{barbay09compressed}, sorting opportunistically in the
comparison model yields a compression scheme for permutations, and
sometimes a compressed succinct data structure supporting the direct
and inverse operators in reasonable time.
We show that this time again the sorting algorithm of
Theorem~\ref{th:LRMTreeSorting} corresponds to a compressed succinct
data structure for permutations which supports the direct and reverse
operators in good time, while often using less space than previous
solutions.
The essential component of our solution is a data structure encoding
the LRM-Partition.
In order to apply Lemma~\ref{lem:EntropySort}, our data structure
must support two operators in good time:
\begin{itemize}
\item the first operator, $\map(i)$, consists of indicating, for each
  position $i\in[1..n]$ in the input permutation $\pi$, the
  corresponding subsequence $s$ of the LRM-Partition, and the relative
  position $p$ of $i$ in this subsequence;
\item the second operator, $\unmap(s,p)$ is just the reverse of the
  previous one: given a subsequence $s\in[1..\nRuns]$ of the
  LRM-Partition of $\pi$ and a position $p\in[1..n_s]$ in it, the
  operator must indicate the corresponding position $i$ in $\pi$.
\end{itemize}

We obviously cannot afford to rewrite the numbers of $\pi$ in the
order described by the partition, which would use $n\lg n$ bits.
\begin{LONG}
  A naive solution would be to encode this partition as a string $S$
  over alphabet $[1..\nRuns]$, using a succinct data structure
  supporting the {\access}, {\rank} and {\select} operators on it.
  This solution is not suitable as it would require at the very least
  $n\entropy(\vRuns)$ bits
  \emph{only to encode the LRM-Partition}, making this encoding worse
  than the $\nRuns$ compressed succinct data structure \cite{barbay09compressed}.
\end{LONG}
We describe a more complex data structure which uses linear space, and
supports the desired operators in constant time.
\begin{lemma}\label{lem:LRMPartitionDataStructure}
  Let $P$ be a LRM-Partition consisting of $\nRuns$ subsequences of
  respective lengths $\vLRM$, summing to $n$.
  There is a succinct data structure 
  using $2(n+\nRuns)+o(n)$ bits and
  supporting the operators $\map$ and $\unmap$ on $P$ in constant
  time.
\end{lemma}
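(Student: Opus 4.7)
The plan is to layer a succinct encoding of the LRM-Tree with a compact marker of the $\nRuns$ path-tops of the LRM-Partition $P$. First, I would encode $\mathcal{T}_\pi$ on its $n+1$ nodes using the fully-functional balanced-parenthesis representation of Sadakane and Navarro~\cite{sadakane10fully}, which costs $2n+o(n)$ bits and supports in $O(1)$ every operation I will need---$\mathrm{depth}$, $\mathrm{leaf}$-$\mathrm{rank}$, $\mathrm{leaf}$-$\mathrm{select}$, $\mathrm{level}$-$\mathrm{ancestor}$---as well as ``lowest ancestor carrying a given mark'' provided the mark is stored in an auxiliary $o(n)$-bit bit-vector index. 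Second, I would store a bit-vector $T[0..n]$ with $T[v]=1$ iff $v$ is the top of its path in $P$ (i.e., $\psv_\pi(v)$ belongs to a different path); $T$ carries $\nRuns$ ones, and using Raman~et~al.'s compressed scheme together with the $2\nRuns+o(\nRuns)$-bit DFUDS of the $\nRuns$-node ``quotient tree'' of paths, the whole marker layer fits into $2\nRuns+o(n)$ bits. The two layers together use $2(n+\nRuns)+o(n)$ bits.

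To answer $\map(i)$, I would run the lowest-marked-ancestor query on $i$ against $T$ to obtain the top $t$ of the path containing $i$, and return $s=\rank_1(T,t)$ as the path index and $p=\mathrm{depth}(i)-\mathrm{depth}(t)+1$ as the within-path rank. To answer $\unmap(s,p)$, I would locate $t=\select_1(T,s)$, obtain the leaf $\ell_s$ of path $s$ via $\mathrm{leaf}$-$\mathrm{select}$ in $\mathcal{T}_\pi$, read the path length $n_s=\mathrm{depth}(\ell_s)-\mathrm{depth}(t)+1$, and return $\mathrm{level}$-$\mathrm{ancestor}(\ell_s,\ n_s-p)$. The critical correctness fact is that the $\nRuns$ tops and the $\nRuns$ leaves appear in the \emph{same} preorder of $\mathcal{T}_\pi$: every path's leaf lies in the subtree rooted at that path's top, these $\nRuns$ subtrees are pairwise disjoint, and preorder visits sibling subtrees in order, so the $s$-th top and the $s$-th leaf in preorder belong to the same path. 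Each step of $\map$ and $\unmap$ is therefore a constant number of $O(1)$ operations on the succinct tree and on $T$.

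The main obstacle is squeezing the marker layer into $2\nRuns+o(n)$ bits: a naive compressed bit-vector for $\nRuns$ ones in $n+1$ slots costs $\Theta(\nRuns\lg(n/\nRuns))$ bits, which is $o(n)$ for $\nRuns=o(n/\lg n)$ and also at most $n+1\leq 2\nRuns+o(n)$ when $\nRuns\geq n/2$, but falls in between in the intermediate regime. Bridging this regime requires storing $T$ \emph{implicitly} through the $2\nRuns$-bit DFUDS of the quotient tree: each of its $\nRuns$ nodes identifies one top in $\mathcal{T}_\pi$ via a constant-time preorder-rank correspondence between the two trees, so the explicit marker is no longer needed beyond $o(n)$ auxiliary. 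A secondary subtlety---verifying that the canonical preorder on tops agrees with the canonical preorder on leaves---is the nested-subtree observation already used in the correctness argument above.
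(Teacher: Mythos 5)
Your $\unmap$ procedure rests on a claim that is false: that the $s$-th path-top in preorder of $\mathcal{T}_\pi$ and the $s$-th leaf in preorder belong to the same path because ``these $\nRuns$ subtrees are pairwise disjoint.'' The subtrees rooted at the tops are \emph{nested}, not disjoint: the subtree rooted at the top of the spinal chord contains every subtree that hangs off the spinal chord, and hence contains the tops of all the paths extracted recursively from them. Concretely, take $\pi=(1,5,2,3,4)$. Its LRM-Tree has $\psv$-parents $0,1,1,3,4$ for positions $1,\dots,5$, the unique spinal chord covers positions $\{1,3,4,5\}$, and the second path is the singleton at position $2$. The tops are positions $1$ and $2$ (in that preorder), but the leaves are positions $2$ and $5$ (in that preorder): the first top belongs to the long path whose leaf is the \emph{second} leaf. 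Your $\unmap(1,p)$ would therefore pair top $1$ with leaf $2$, compute a path length of $2$ instead of $4$, and walk up the wrong branch. Fixing this requires knowing, for each top, which leaf terminates its path; storing that association explicitly costs $\Theta(\nRuns\lg n)$ bits, and it is not recoverable from the tree shape alone since it depends on the recursive choice of spinal chords.

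There is a second, independent gap: you invoke ``lowest marked ancestor'' in $O(1)$ time as if it were a standard operation supported by an $o(n)$-bit index over the balanced-parenthesis encoding. It is not; the natural reduction (enclose on the subsequence of parentheses belonging to marked nodes) needs to locate the $2\nRuns$ marked parentheses inside the length-$2n$ sequence, which costs $\Theta(\lg\binom{n}{\nRuns})=\Theta(\nRuns\lg(n/\nRuns))$ bits --- exactly the intermediate-regime overshoot you identify but do not resolve: the quotient-tree DFUDS records the shape of the path forest but not \emph{where} its nodes sit in $\mathcal{T}_\pi$. The paper sidesteps both problems by not indexing the LRM-Tree at all. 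It observes that the position sets of the subsequences form a laminar family, builds a fresh ordinal forest on $n+\nRuns$ nodes whose internal nodes are the subsequences and whose leaves are the $n$ positions (a leaf is a child of the subsequence containing it), and encodes it in $2(n+\nRuns)+o(n)$ bits with rank/select on the patterns \texttt{()} and \texttt{((}. Then $\map$ is parent plus preorder-rank among internal nodes plus leaf-sibling-rank, and $\unmap$ is the exact inverse --- no marked-ancestor queries and no top-to-leaf pairing are ever needed.
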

\begin{proof}
  The main idea of the data structure is that the subsequences of a
  LRM-Partition for a permutation $\pi$ are not as general as, say,
  the subsequences of the partition into $\nSUS$ up-sequences.
  For each pair of subsequences $(u,v)$, either the positions of $u$
  and $v$ belongs to distinct intervals of $\pi$, or the values
  corresponding to $u$ (resp. $v$) all fall between two values from
  $v$ (resp. $u$).

  As such, the subsequences of the LRM-Partition can be organized into
  a forest of ordinal trees,
  where the internal nodes of the trees correspond to the $\nRuns$
  subsequences of the LRM-Partition, organized so that $u$ is parent
  of $v$ if the positions of $v$ are contained between two positions
  of $u$, and
  where the leaves of the trees correspond to the $n$ positions in
  $\pi$, children of the internal node $u$ corresponding to the
  subsequence they belong to.
  \begin{LONG}
    For instance, the permutation $\pi=(4,5,9,6,8,1,3,7,2)$ has a
    unique LRM-Partition $\{(4,5,6,8),(9),(1,3,7),(2)\}$, whose
    encoding can be visualized by the expression $(45(9)68)(137)(2)$
    and encoded by the balanced parenthesis expression
    $(()()(())()())(()()())(())$ (note that this is a forest, not a
    tree, hence the excess of '('s versus ')'s is going to zero several times inside the
    expression).
  \end{LONG}

  Given a position $i\in[1..n]$ in $\pi$,
  the corresponding subsequence $s$ of the LRM-Partition is simply
  obtained by finding the parent of the $i$-th leaf, and returning its
  preorder rank among internal nodes.
  The relative position $p$ of $i$ in this subsequence is given by the
  number of its left siblings which are leaves.
  Given a subsequence $s\in[1..\nRuns]$ of the LRM-Partition of $\pi$
  and a position $p\in[1..n_s]$ in it, the corresponding position $i$
  in $\pi$ is computed by finding the $s$-th internal node in
  preorder, selecting its $p$-th child which is a leaf, and computing
  the preorder rank of this node among all the leaves of the tree.

  We represent such a forest using a Balanced Parentheses
  Sequence
  using $2(n+\nRuns)+o(n)$ bits and enhance it with a
  $o(n)$-bit succinct index \cite{sadakane07compressed}
  supporting in constant time the operators
  {\rank} and {\select} on leaves (i.e., on the pattern
  '()'), and
  {\rank} and {\select} on internal nodes (i.e., on the pattern
  '((').
  With these operators we can simulate all operations described in the
  previous paragraph.
\end{proof}

Given the data structure for LRM-Partitions from
Lemma~\ref{lem:LRMPartitionDataStructure}, applying the merging
data structure from Lemma~\ref{lem:EntropySort} immediately yields a
compressed succinct data structure for permutations.
Note that this encoding is \emph{not} a succinct index, so that it
would not make any sense to measure its space complexity in term of
data and index complexity.

\begin{theorem}\label{th:LRMPermutationDataStructure}
  Let $\pi$ be a permutation of size $n$ and $P$ a LRM-Partition for
  $\pi$ consisting of $\nRuns$ subsequences of respective lengths
  $\vLRM$.
  There is a compressed succinct data structure
  using $(1+\entropy(\vLRM))(n+o(n))+\Oh(\nRuns\lg n)$ bits,
  supporting the computation of $\pi(i)$ and $\pi^{-1}(i)$ in time
  $\Oh(1+\lg\nRuns)$ in the worst case, and in time
  $\Oh(1+\entropy(\vLRM))$ on average when $i$ is chosen uniformly at
  random in $[1..n]$, and
  which can be computed in the times indicated in
  Theorem~\ref{th:LRMTreeSorting}, summing to
  $\Oh(n(1+\entropy(\vLRM)))$.
\end{theorem}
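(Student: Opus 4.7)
The plan is to compose the LRM-Partition data structure of Lemma~\ref{lem:LRMPartitionDataStructure} with the merging data structure of Lemma~\ref{lem:EntropySort}. First, build the LRM-Tree and extract an LRM-Partition as in Theorem~\ref{th:LRMTreeSorting}, in time $\Oh(n(1+\entropy(\vLRM)))$. Second, store the partition itself using Lemma~\ref{lem:LRMPartitionDataStructure}, which offers $\map$ and $\unmap$ in constant time using $2(n+\nRuns)+o(n)$ bits. Third, encode the merging of the $\nRuns$ sorted subsequences using Lemma~\ref{lem:EntropySort}, in $(1+\entropy(\vLRM))(n+o(n))+\Oh(\nRuns\lg n)$ bits; this is essentially a Huffman-shaped wavelet tree whose leaves are in correspondence with the subsequences of the partition.

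To compute $\pi(i)$, first invoke $\map(i)$ to obtain the pair $(s,p)$ locating position $i$ inside the partition, and then query the merging data structure to find where the $p$-th element of subsequence $s$ lands in the merged (fully sorted) sequence; since $\pi$ is a permutation of $[1..n]$ the index in the sorted sequence is exactly the value $\pi(i)$. Symmetrically, for $\pi^{-1}(v)$ one queries the merging data structure at position $v$ to recover the source pair $(s,p)$ inside the partition, and then applies $\unmap(s,p)$ to obtain $i$. The $\map$ and $\unmap$ calls cost $\Oh(1)$ by Lemma~\ref{lem:LRMPartitionDataStructure}, while the wavelet-tree traversal costs $\Oh(1+\lg\nRuns)$ in the worst case and $\Oh(1+\entropy(\vLRM))$ on average by Lemma~\ref{lem:EntropySort}, matching the stated query bounds.

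For the space, the $\Oh(n)$ bits of the partition structure are absorbed into the $(1+\entropy(\vLRM))(n+o(n))$ term of the merging encoding, and the sum is $(1+\entropy(\vLRM))(n+o(n))+\Oh(\nRuns\lg n)$ bits as claimed. For construction, Theorem~\ref{th:LRMTreeSorting} already produces the partition and executes the merging in the claimed time, and the $o(n)$-bit indices required on top of the balanced-parenthesis sequence and the wavelet tree can be precomputed in linear time on top of the sort.

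The one delicate point, which I would state explicitly, is the compatibility of coordinates between the two structures: $\map$/$\unmap$ number the elements of subsequence $s$ in the order in which they appear in $\pi$, while the wavelet-tree operators index them in sorted order. This is a non-issue here because every subsequence of an LRM-Partition corresponds to a down-path of the LRM-Tree and is therefore already in increasing order, so the $p$-th element in one view is the $p$-th element in the other. This observation is what makes the composition of the two lemmas produce a self-contained compressed succinct representation rather than merely an index requiring access to $\pi$.
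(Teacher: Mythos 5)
Your proposal is correct and follows essentially the same route as the paper: combine the LRM-Partition structure of Lemma~\ref{lem:LRMPartitionDataStructure} (constant-time $\map$/$\unmap$) with the merging structure of Lemma~\ref{lem:EntropySort}, and sum the two space bounds. Your explicit remark on the compatibility of the two coordinate systems (each LRM-subsequence is a down-path and hence already stored in increasing order) is a useful clarification that the paper leaves implicit, but it does not change the argument.
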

\begin{proof}
  Lemma~\ref{lem:LRMPartitionDataStructure} yields a data structure
  for a LRM-Partition of $\pi$ using $2(n+\nRuns)+o(n)$ bits, and
  supports the $\map$ and $\unmap$ operators in constant time.
  The merging data structure from Lemma~\ref{lem:EntropySort} requires
  $(1+\entropy(\vLRM))(n+o(n))+\Oh(\nRuns\lg n)$ bits, and
  supports the operators $\pi()$ and $\pi^{-1}()$ in the time
  described, through the additional calls to $\map$ and $\unmap$.
  Summing both spaces yields the desired final space.
\end{proof}
\begin{JOHANNES}
  Jeremy, pls carefully check the above.
\end{JOHANNES}
\begin{JEREMY}
  I did, I think.  One confusion I have: Computing the Hufman takes
  only $\nPartition\lg\nPartition$ comparisons, but encoding the lengths of the
  runs requires something like $\nPartition\lg n$ bits. This is smaller than
  $n$ bits anyway so we could hide it away.
\end{JEREMY}

\section{Conclusion and Future Work}
\label{sec:concl-future-work}

%
One additional result not described here is
\begin{LONG}
  how to take advantage of strict runs, in addition of
  taking advantage of general runs, for LRM sorting and encoding of
  permutation.
  Another related result is
\end{LONG}
a variant of LRM-Trees, \emph{Roller Coaster Trees} (RC-Trees), which
take advantage of permutations formed by the combinations of ascending
and descending runs.
\begin{LONG}
  This approach is trivial when considering subsequences of
  consecutive positions, gets slightly technical when considering the
  insertion of descending runs, and requires new techniques to adapt
  the compressed succinct data structure to this new setting.
  Since the optimal partitioning into up and down sequences when
  considering general subsequences requires exponential time,
  RC-Sorting seems a much desirable improvement on merging ascending
  and descending runs, as well as a more practical alternative to
  SMS-Sorting, in the same way as LRM-Tree improved on Runs-Sorting
  while staying more practical than SUS-Sorting.
\end{LONG}
Another result to come is the generalization of our results to the
indexing, sorting and compression of general sequences (i.e., also to
integer functions),
\begin{JOHANNES}
  But we do support RMQs for general sequences, don't we?
  (It's what we claim, and we discussed it a while ago.)
\end{JOHANNES}
taking advantage of the redundancy in a general
sequence to sort faster and encode in even less space, in function of
both the entropy of the frequencies of the symbols and the entropy of
the lengths of the subsequences of the LRM-Partition.
Finally, studying the integration of those compressed data structures
into compressed text indexes like suffix arrays \cite{navarro07compressed}
is likely to yield interesting results, too.

\bibliographystyle{abbrv}
\bibliography{paper}

\begin{INUTILE}
\newpage\appendix

  \section{Proof of Lemma~\ref{lem:EntropySort}}

\end{INUTILE}
\end{document}